\theoremstyle{definition}
\newtheorem{definition}{Definition}
\newtheorem{assumption}{Assumption}
\theoremstyle{plain}
\newtheorem{theorem}{Theorem}
\newtheorem{proposition}{Proposition}
\journal{Aerospace Science and Technology}
\begin{document}

\begin{frontmatter}

\title{Decentralized Affine Transformation for Scalable and Safe Multi-Agent Aerial Coordination}

\author[1]{Garegin Mazmanyan}
\ead{gmazmanyan@arizona.edu}

\author[2,3]{Hossein Rastgoftar\corref{cor1}}
\cortext[cor1]{Corresponding author}
\ead{hrastgoftar@arizona.edu}

\affiliation[1]{organization={Department of Computer Science, University of Arizona},
            city={Tucson},
            postcode={85721},
            state={AZ},
            country={USA}}

\affiliation[2]{organization={Department of Aerospace and Mechanical Engineering, University of Arizona},
            city={Tucson},
            postcode={85721},
            state={AZ},
            country={USA}}

\affiliation[3]{organization={Department of Electrical and Computer Engineering, University of Arizona},
            city={Tucson},
            postcode={85721},
            state={AZ},
            country={USA}}

\begin{abstract}
This paper presents an experimental evaluation of decentralized affine transformation (AT) in multi-agent systems using teams of mini-quadcopters. The AT framework enables an agent team to safely navigate constrained environments with narrow passages while allowing aggressive changes in inter-agent distances, which are formally characterized through the decomposition of the AT transformation matrix. We focus on two-dimensional AT, formulated as a decentralized leader--follower problem. In this formulation, three leader quadcopters are positioned at the vertices of a triangle, while all follower quadcopters remain within the triangle. The leaders know the desired trajectories prescribed by the AT, whereas the followers do not. Instead, the followers infer their trajectories through local communication governed by fixed communication weights determined by the team's initial spatial configuration. Experimental results provide evidence consistent with the predicted bounded-error tracking and convergence of decentralized AT and demonstrate its capability to safely guide multi-agent teams through constrained environments with narrow passages.
\end{abstract}

\begin{highlights}
\item Experimental validation of decentralized affine transformation for UAV swarms
\item Single principal-strain bound certifies collision-free motion for any team size
\item Followers use only neighbors' measured positions, no leader trajectories
\item Bounded-error tracking on six quadrotors flying through a narrow passage
\end{highlights}

\begin{keyword}
Aerial robotics \sep Network control \sep Decentralized leader--follower systems \sep Multi-agent coordination \sep Affine transformation
\end{keyword}

\end{frontmatter}

%% ============================================================
%% Main text
%% ============================================================

\section{Introduction}

Safe and scalable coordination of multi-agent aerial systems in constrained environments remains a fundamental challenge, particularly when formations must undergo significant geometric deformation while maintaining guaranteed inter-agent separation. Existing decentralized coordination frameworks, including consensus and containment control, have been extensively studied and applied across multi-agent systems \cite{chang2025consensus, ma2024consensus, thummalapeta2023survey, li2021containment}. While these approaches provide convergence guarantees under various communication and dynamical assumptions, they lack mechanisms to enforce safety under aggressive reconfiguration. In contrast, safety-critical control methods that impose pairwise collision-avoidance constraints scale quadratically with the number of agents, limiting their applicability to large teams.

This paper introduces a control-theoretic framework for scalable and safe multi-agent coordination based on affine transformation (AT) and principles of continuum deformation. The key idea is to represent the collective motion of an agent team as a global affine transformation, whose deformation is characterized by principal strains. By constraining these strains, safety can be enforced through a single global condition that is independent of the number of agents. This provides a fundamentally different safety abstraction compared to pairwise constraint-based methods.

The framework further admits a decentralized realization through a leader--follower architecture in which follower agents reconstruct their desired trajectories using fixed barycentric weights and local measurements of neighboring agents. Unlike existing approaches, this realization does not require access to global trajectories or model-based prediction, and operates purely on real-time measured states. The resulting closed-loop system couples local tracking dynamics with the communication topology, yielding convergence of all agents to a globally consistent affine transformation.

The central contribution of this paper is to show that affine transformation serves as both a geometric description of collective motion and a scalable, safety-certified coordination mechanism for multi-agent aerial systems. Unlike pairwise safety methods, the proposed framework encodes collision avoidance through global deformation constraints on the principal strains of the transformation. Unlike classical containment or consensus-based coordination, followers do not receive leader trajectories or global reference signals; instead, they reconstruct the affine motion through local measurements and fixed barycentric communication weights. This combination of global deformation planning, decentralized acquisition, and hardware validation provides a practical framework for safe multi-agent aerial coordination in constrained environments.

\subsection{Related work}
Consensus and containment control represent two foundational paradigms in decentralized multi-agent coordination. Consensus protocols, in particular, have been extensively studied and widely applied across diverse domains, including formation control of unmanned vehicles \cite{chang2025consensus, ma2024consensus}, distributed sensing and coverage \cite{facinelli2019gas, pasek2022review}, flocking and swarming dynamics \cite{fernando2021flocking, epp2024ota}, cooperative decision-making \cite{stankovic2023tdlearning, bassolillo2025mh}, and network load balancing \cite{brooks1996sensorfusion, pasek2022review}. Their stability and convergence properties have been rigorously characterized for communication networks without delays and those subject to delays \cite{olfati2004consensus, li2010consensus, li2010distributed}.

Containment control has been extensively studied over the last decade
\cite{thummalapeta2023survey,haghshenas2015containment,li2021containment,bechlioulis2024robust,xu2025containment}.
Early work on containment control investigated distributed containment under stationary and dynamic leaders \cite{wang2010distributed}, introduced the concept of robust containment control \cite{su2011robust}, developed decentralized leader--follower coordination under fixed and switching communication topologies \cite{ni2010leader}, and proposed a model-reference adaptive control framework \cite{meng2010distributed}. In \cite{thummalapeta2023survey}, the authors comprehensively review key concepts, communication frameworks, dynamics modeling, and controller design strategies in containment control. Containment control of heterogeneous linear multi-agent systems has been studied in \cite{wang2010distributed,haghshenas2015containment}. Containment control guided by multiple leaders in the presence of input constraints is addressed in \cite{li2021containment}. Assuming each agent is modeled by high-order dynamics, containment control is demonstrated in \cite{meng2010distributed} under nonconvex inputs and delays, while accounting for realistic constraints. Safety and collision avoidance is one main challenge of multi-agent coordination to ensure scalability. Recent learning-based methods obtain safe UAV control through learned policies with formal guarantees: safe reinforcement learning from human demonstration has been applied to adaptive hierarchical quadcopter control \cite{tan2026adaptive}, and fixed-time stochastic learning has been developed for human--UAV interaction under state and input constraints \cite{tan2026fixed}. These methods enforce safety at the level of a single vehicle or a human--vehicle pair, and are complementary to the team-level geometric safety certificate pursued in this work.

The corresponding author has extensively investigated the concept of decentralized AT over the past decade \cite{rastgoftar2021safe, rastgoftar2022integration, rastgoftar2021scalable}. Inspired by the principles of continuum mechanics, AT has also been referred to as a \textit{homogeneous transformation} in previous work, characterized by a Jacobian matrix and a rigid-body translation. By applying the polar decomposition of the Jacobian matrix into rotation, shear deformation, and axial deformation, formal guarantees for agent containment and inter-agent collision avoidance are provided by constraining the axial deformation of the homogeneous transformation \cite{rastgoftar2021safe, rastgoftar2021scalable}.

Similar to containment control, AT can also be implemented within a decentralized leader--follower framework. Unlike traditional methods, however, AT explicitly characterizes safety and guarantees collision avoidance, while remaining inherently scalable to large numbers of agents through purely local communication. In particular, AT in $\mathbb{R}^n$ can be formulated as a decentralized leader--follower problem, where the trajectories of $n+1$ leaders positioned at the vertices of an $n$-D simplex are tracked and adopted by the followers. Thus, AT reduces to a leader path-planning problem that has been effectively addressed using classical optimal control \cite{rastgoftar2018cooperative}, A* search \cite{rastgoftar2022integration, rastgoftar2021scalable}, and particle swarm optimization \cite{liang2019multi}.

Pairwise collision-avoidance constraints grow quadratically with the number of agents and can become infeasible in large teams. In contrast, decentralized AT enforces safety for an arbitrary number of agents in a computationally efficient manner, as a large population of followers can acquire the desired AT solely through local communication at minimal computational cost.

Centralized AT, where each agent is provided with the desired trajectories defined by an AT, has been experimentally validated in \cite{romano2019experimental, 10156556}. Decentralized AT was also explored in \cite{romano2019experimental}, under the restriction to uniform contraction; however, a comprehensive experimental validation has remained an open challenge.

The present work differs from prior studies on affine or homogeneous transformation in three important ways. First, while earlier work primarily developed theoretical formulations or centralized implementations, this paper focuses on decentralized acquisition in which followers use only real-time measured neighbor positions and fixed communication weights. Second, although previous experimental studies considered restricted deformation patterns, the present work validates decentralized affine transformation under contraction, rigid-body motion, and non-uniform deformation in a physically constrained environment. Third, this paper explicitly connects the deformation parameters to a global collision-avoidance condition and evaluates the resulting safety margin using hardware flight data. Therefore, the contribution is not merely an experimental replication of prior AT methods, but a decentralized and safety-certified realization of affine coordination for aerial robotic teams.

\subsection{Contributions}

This paper develops and experimentally evaluates a decentralized AT framework for safe multi-agent aerial coordination. The main contributions are as follows.

\begin{enumerate}
    \item \textbf{Global safety characterization through affine deformation:}
    We establish a collision-avoidance condition based on the principal strains of the affine transformation. The condition depends only on the minimum initial inter-agent separation, the agent radius, and the tracking-error bound, and provides a global safety certificate that does not require pairwise constraints.

    \item \textbf{Decentralized acquisition without trajectory sharing:}
    We formulate a leader--follower realization in which each follower reconstructs its reference trajectory using only real-time measured positions of its neighbors and fixed barycentric weights determined from the initial configuration. Under local tracking convergence and fixed leader terminal positions, this closed-loop architecture drives all agents to positions consistent with the prescribed affine transformation, without access to leader trajectories, global desired positions, or online optimization.

    \item \textbf{Hardware validation in constrained environments:}
    We validate the framework using six quadcopters navigating through a narrow passage, demonstrating formation contraction, rigid-body motion, and non-uniform deformation with bounded tracking errors and measured inter-agent separation above the collision threshold.
\end{enumerate}\par
\subsection{Outline}\label{sec:outline}
This paper is organized as follows: The preliminary notions and properties of AT are reviewed in Section~\ref{Prem}. Assuming that each agent is modeled by generic dynamics, AT planning and acquisition are presented in Section~\ref{Approach}. The experimental setup is described in Section~\ref{sec:experimental_setup}, and the experimental results of decentralized AT are reported in Section~\ref{sec:results}. A discussion of the results and limitations is provided in Section~\ref{sec:discussion}. Finally, the concluding remarks are provided in Section~\ref{Conclusion}.

\section{Preliminaries}\label{Prem}
We consider a group of $N$ quadcopters defined by set $\mathcal{V}=\left\{u_1,\cdots,u_N\right\}$. Set $\mathcal{V}$ can be expressed as $\mathcal{V}=\mathcal{L}\cup \mathcal{F}$, where $\mathcal{L}=\left\{u_1,u_2,u_3\right\}$ and $\mathcal{F}=\left\{u_4,\cdots,u_N\right\}$ define  the leader and follower quadcopters, respectively. Leaders are at vertices of a triangle that is called the \textit{leading triangle} which is deformable and its deformation is determined by an AT at any time $t$.  The remaining followers are desired to be contained inside the leading triangle. Inter-agent communications among the quadcopters are structured by a directed graph $\mathcal{G}\left(\mathcal{V},\mathcal{E}\right)$, where $\mathcal{E}\subset \mathcal{V}\times \mathcal{V}$ defines inter-agent communication links. We say that $(j,i)\in \mathcal{E}$, if $i\in \mathcal{V}$ communicates with $j\in \mathcal{V}$. Given set $\mathcal{E}$,
\begin{equation}
    \mathcal{N}_i=\left\{j\in\mathcal{V}:\left(j,i\right)\in \mathcal{E} \right\},\qquad \forall i\in \mathcal{V},
\end{equation}
defines in-neighbor quadcopters of quadcopter $i\in \mathcal{V}$.
\begin{assumption}\label{asum1}
    Leader quadcopters move independently, therefore,
    \begin{equation}
        \bigwedge_{i\in \mathcal{L}}\left(\mathcal{N}_i=\emptyset\right).
    \end{equation}
\end{assumption}
\begin{assumption}\label{structurconst}
    This paper considers a two-dimensional AT of a multi-quadcopter system where every follower quadcopter is strictly communicates with three in-neighbors. This assumption can be formally specified as follows:
    \begin{equation}
        \bigwedge_{i\in \mathcal{F}}\left(\left|\mathcal{N}_i\right|=3\right).
    \end{equation}
\end{assumption}
Assumption \ref{structurconst} will be used for obtaining unique communication weights for every follower $i\in \mathcal{V}$ and  prove the decentralized convergence of AT coordination.

For every agent $i\in \mathcal{V}$,
\begin{subequations}
    \begin{equation}
        \mathbf{a}_i=\begin{bmatrix}X_i&Y_i&Z\end{bmatrix}^T,\qquad i\in \mathcal{V},
    \end{equation}
    \begin{equation}        \mathbf{r}_i(t)=\begin{bmatrix}x_i(t)&y_i(t)&z_i(t)\end{bmatrix}^T,\qquad i\in \mathcal{V},
    \end{equation}
    \begin{equation}        \mathbf{r}_{i,d}(t)=\begin{bmatrix}x_{i,d}(t)&y_{i,d}(t)&z_{i,d}(t)\end{bmatrix}^T,\qquad i\in \mathcal{V},
    \end{equation}
    \begin{equation}        \mathbf{p}_i(t)=\begin{bmatrix}p_{x,i}(t)&p_{y,i}(t)&p_{z,i}(t)\end{bmatrix}^T,\qquad i\in \mathcal{V},
    \end{equation}
\end{subequations}
denote the initial position, the actual position, the reference position (input to the control system), and the global desired position, respectively. In this paper, the actual position of every quadcopter $i\in \mathcal{V}$ is precisely measured by a high-resolution motion capture system, with the specifics provided in Section~\ref{sec:experimental_setup}. Note that the $Z$ component of the reference positions of all the agents are the same as $Z$ which in turn implies that the reference configuration of the quadcopter team is distributed in a horizontal plane.

The global desired position of every agent $i\in \mathcal{V}$ is defined by
\begin{equation}\label{affinetransformation}
\mathbf{p}_i(t)=\mathbf{Q}(t)\mathbf{a}_i+\mathbf{d}(t),\qquad \forall i\in \mathcal{V},
\end{equation}
where $\mathbf{Q}\in \mathbb{R}^{3\times 3}$ is the Jacobian matrix and  $\mathbf{d}\in \mathbb{R}^{3\times 1}$ is the  rigid-body translation vector.

\begin{theorem}
    Under transformation, the global desired position $\mathbf{p}_i(t)$ of every follower $i\in \mathcal{F}$ is uniquely specified based on leaders' desired position at any time $t$ by
    \begin{equation}\label{leader-followerdesired}
        \mathbf{p}_i(t)=\sum_{j\in \mathcal{L}}\alpha_{i,j}\mathbf{p}_j(t),\qquad \forall i\in \mathcal{F},
    \end{equation}
    where $\alpha_{i,u_1}$, $\alpha_{i,u_2}$, and $\alpha_{i,u_3}$ are constant and obtained based on reference position of leaders and $i\in \mathcal{F}$ by
    \begin{equation}
        \begin{bmatrix}
            \alpha_{i,u_1}\\
            \alpha_{i,u_2}\\
            \alpha_{i,u_3}
        \end{bmatrix}
        =\begin{bmatrix}
        X_{u_1}&X_{u_2}&X_{u_3}\\
        Y_{u_1}&Y_{u_2}&Y_{u_3}\\
        1&1&1
        \end{bmatrix}
        ^{-1}
        \begin{bmatrix}
        X_i\\
        Y_i\\
        1
        \end{bmatrix}
        .
    \end{equation}
\end{theorem}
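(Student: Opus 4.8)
The plan is to exploit the fact that the coefficients $\alpha_{i,u_1},\alpha_{i,u_2},\alpha_{i,u_3}$ are, by construction, the (unique) barycentric coordinates of the reference point $\mathbf{a}_i$ with respect to the reference leading triangle, and then to show that the affine map \eqref{affinetransformation} preserves these coordinates. First I would observe that the matrix
\begin{equation*}
\mathbf{M}=\begin{bmatrix}X_{u_1}&X_{u_2}&X_{u_3}\\ Y_{u_1}&Y_{u_2}&Y_{u_3}\\ 1&1&1\end{bmatrix}
\end{equation*}
is invertible precisely because the three leaders are in general position (the leading triangle is nondegenerate), so the system defining the $\alpha_{i,u_j}$ has a unique solution; this also gives $\sum_{j\in\mathcal{L}}\alpha_{i,u_j}=1$ immediately from the third row. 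Reading the first two rows of that same linear system, we get $\sum_{j\in\mathcal{L}}\alpha_{i,u_j}X_{u_j}=X_i$ and $\sum_{j\in\mathcal{L}}\alpha_{i,u_j}Y_{u_j}=Y_i$, i.e. $\sum_{j\in\mathcal{L}}\alpha_{i,u_j}\mathbf{a}_{u_j}=\mathbf{a}_i$ (the $Z$-component matches too, since all reference $Z$-coordinates equal the common value $Z$ and the weights sum to one).

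Next I would substitute the affine transformation \eqref{affinetransformation} into the candidate right-hand side of \eqref{leader-followerdesired}. Using $\mathbf{p}_j(t)=\mathbf{Q}(t)\mathbf{a}_{u_j}+\mathbf{d}(t)$ for $j\in\mathcal{L}$ and linearity,
\begin{equation*}
\sum_{j\in\mathcal{L}}\alpha_{i,u_j}\mathbf{p}_j(t)=\mathbf{Q}(t)\Bigl(\sum_{j\in\mathcal{L}}\alpha_{i,u_j}\mathbf{a}_{u_j}\Bigr)+\Bigl(\sum_{j\in\mathcal{L}}\alpha_{i,u_j}\Bigr)\mathbf{d}(t)=\mathbf{Q}(t)\mathbf{a}_i+\mathbf{d}(t)=\mathbf{p}_i(t),
\end{equation*}
where the two identities from the previous paragraph ($\sum\alpha_{i,u_j}\mathbf{a}_{u_j}=\mathbf{a}_i$ and $\sum\alpha_{i,u_j}=1$) are exactly what is needed. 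This establishes \eqref{leader-followerdesired}, and the weights are constant in $t$ because they depend only on the fixed reference positions.

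For uniqueness of the representation, I would note that any three scalars $\beta_1,\beta_2,\beta_3$ satisfying $\mathbf{p}_i(t)=\sum_j\beta_j\mathbf{p}_j(t)$ for all $t$ must in particular agree with the reference configuration (take $\mathbf{Q}=\mathbf{I}$, $\mathbf{d}=\mathbf{0}$, which is an admissible AT), forcing $\sum_j\beta_j\mathbf{a}_{u_j}=\mathbf{a}_i$; appending the normalization $\sum_j\beta_j=1$ (obtained by equating, e.g., the affine-invariant "weight" component, or simply by noting the representation must be consistent across the family of ATs including pure translations) yields the same invertible linear system as above, so $\beta_j=\alpha_{i,u_j}$. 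The only real subtlety — and the step I would be most careful about — is justifying the normalization $\sum_{j\in\mathcal{L}}\alpha_{i,u_j}=1$ and the affine (not merely linear) consistency: here it comes for free from the padded third row of the $3\times3$ system, which is precisely why the problem is set up with the "$1$"s appended and with the common $Z$-coordinate, but one should state explicitly that this is what makes the $\mathbf{d}(t)$ term collapse correctly. Everything else is routine linear algebra.
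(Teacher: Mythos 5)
Your proof is correct, and it is essentially the standard argument: the coefficients $\alpha_{i,u_j}$ are the barycentric coordinates of $\mathbf{a}_i$ with respect to the reference leading triangle (unique because the leaders are non-collinear, with $\sum_{j\in\mathcal{L}}\alpha_{i,u_j}=1$ from the third row), and the affine map $\mathbf{p}_i(t)=\mathbf{Q}(t)\mathbf{a}_i+\mathbf{d}(t)$ preserves affine combinations, which is exactly why the $\mathbf{d}(t)$ term collapses. The paper itself does not reproduce a proof but defers to the cited references, and your self-contained derivation — including the observation that the common $Z$-coordinate makes the third component consistent — matches the argument used there; your extra care about where the normalization $\sum_j\alpha_{i,u_j}=1$ comes from is exactly the right point to flag.
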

\begin{proof}

The three leaders are non-collinear, so the matrix in the statement is invertible and the weights $\alpha_{i,u_1},\alpha_{i,u_2},\alpha_{i,u_3}$ are the unique solution of the linear system; its last row gives $\sum_{j\in\mathcal{L}}\alpha_{i,j}=1$. Because every agent shares the same height $Z$, the same weights satisfy $\mathbf{a}_i=\sum_{j\in\mathcal{L}}\alpha_{i,j}\mathbf{a}_j$. Applying \eqref{affinetransformation} and using $\sum_{j\in\mathcal{L}}\alpha_{i,j}=1$,
\[
\mathbf{p}_i=\mathbf{Q}\mathbf{a}_i+\mathbf{d}
=\sum_{j\in\mathcal{L}}\alpha_{i,j}\left(\mathbf{Q}\mathbf{a}_j+\mathbf{d}\right)
=\sum_{j\in\mathcal{L}}\alpha_{i,j}\mathbf{p}_j,
\]
which is \eqref{leader-followerdesired}. The weights are constant because they depend only on the initial positions. See also \cite{rastgoftar2022integration, rastgoftar2021safe, rastgoftar2021scalable}.

\end{proof}

The reference trajectory of agent $i\in \mathcal{V}$ is defined by
\begin{equation}\label{desiredtraject}
\mathbf{r}_{i,d}(t)=\begin{cases}
\mathbf{p}_i(t)&i\in \mathcal{L}\\
\sum_{j\in \mathcal{N}_i}w_{i,j}\mathbf{r}_j(t)&i\in \mathcal{F}
\end{cases}
,
\end{equation}
where $w_{i,j}$ is positive and constant, for every $i\in \mathcal{F}$ and $j\in \mathcal{N}_i$, and
\begin{equation}
    \sum_{j\in \mathcal{N}_i}w_{i,j}=1.
\end{equation}
To achieve an AT in a decentralized fashion, $w_{i,j}$ is constant and determined based on the reference position of $i\in \mathcal{F}$ and its in-neighbors, defined by $\mathcal{N}_i$. Expressing $\mathcal{N}_i=\left\{i_1,i_2,i_3\right\}$, communication weights of every follower $i\in \mathcal{F}$ is determined by
    \begin{equation}\label{folcomweights}
        \begin{bmatrix}
            w_{i,i_1}\\
            w_{i,i_2}\\
            w_{i,i_3}
        \end{bmatrix}
        =\begin{bmatrix}
        X_{i_1}&X_{i_2}&X_{i_3}\\
        Y_{i_1}&Y_{i_2}&Y_{i_3}\\
        1&1&1
        \end{bmatrix}
        ^{-1}
        \begin{bmatrix}
        X_i\\
        Y_i\\
        1
        \end{bmatrix}
    \end{equation}
\begin{definition}
    We define weight matrix $\mathbf{W}=\left[W_{ij}\right]\in \mathbb{R}^{N\times N}$ as the weighted Laplacian matrix with the $(i,j)$ entry of $\mathbf{W}$ is defined as follows:
    \begin{equation}
    W_{ij}=\begin{cases}
        -1&j=i\\
        w_{u_i,u_j}&u_i\in \mathcal{F},~u_j\in \mathcal{N}_{u_i}\\
        0&\mathrm{otherwise}
    \end{cases}
    .
    \end{equation}
\end{definition}
\begin{definition}
    We define
    \begin{equation}
        \mathbf{L}=\begin{bmatrix}
            \mathbf{I}_3&\mathbf{0}_{3\times (N-3)}
        \end{bmatrix}^T\in \mathbb{R}^{N\times 3}.
    \end{equation}
\end{definition}
\begin{definition}
    We define non-negative matrix $\mathbf{H}=\left[H_{ij}\right]\in \mathbb{R}^{N\times 3}$ with the $(i,j)$ entry
    \begin{equation}
        H_{ij}=\begin{cases}
            1&u_i\in \mathcal{L},~j=i\\
            \alpha_{u_i,u_j}&u_i\in\mathcal{F},~u_j\in \mathcal{L}\\
            0&\mathrm{otherwise}
        \end{cases}
        .
    \end{equation}
\end{definition}
\begin{theorem}\label{thm2}
    Assume inter-agent communication is defined such that there exists at least one path from every leader $j\in \mathcal{L}$ to every follower $i\in \mathcal{F}$ and Assumptions \ref{asum1} and \ref{structurconst} are satisfied.
    Then, $\mathbf{W}\in \mathbb{R}^{N\times N}$ is Hurwitz and
    \begin{equation}
        \mathbf{H}=-\mathbf{W}^{-1}\mathbf{L}.
    \end{equation}
\end{theorem}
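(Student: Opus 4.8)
The plan is to exploit the block-triangular structure of $\mathbf{W}$ together with the fact that both the communication weights $w_{i,j}$ and the containment coefficients $\alpha_{i,j}$ are barycentric coordinates and therefore obey a compatible composition rule. With the three leaders occupying the first three indices (as in the paper's labeling), Assumption~\ref{asum1} makes every leader row of $\mathbf{W}$ equal to $-\mathbf{e}_i^\top$, so $\mathbf{W}$ is block lower triangular,
\begin{equation*}
\mathbf{W}=\begin{bmatrix}-\mathbf{I}_3 & \mathbf{0}_{3\times(N-3)}\\ \mathbf{W}_{\mathcal{F}\mathcal{L}} & \mathbf{W}_{\mathcal{F}\mathcal{F}}\end{bmatrix},
\end{equation*}
whence $\operatorname{spec}(\mathbf{W})=\{-1\}\cup\operatorname{spec}(\mathbf{W}_{\mathcal{F}\mathcal{F}})$, and it suffices to prove that the follower block $\mathbf{W}_{\mathcal{F}\mathcal{F}}$ is Hurwitz.

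For that step I would write $-\mathbf{W}_{\mathcal{F}\mathcal{F}}=\mathbf{I}_{N-3}-\mathbf{P}$, where $\mathbf{P}\ge\mathbf{0}$ collects the follower-to-follower weights. Because $\sum_{j\in\mathcal{N}_i}w_{i,j}=1$ and the strictly positive weights placed on any leader in-neighbors are absent from $\mathbf{P}$, each row of $\mathbf{P}$ sums to at most $1$, with strict inequality exactly for followers adjacent to a leader; hence $\mathbf{P}$ is substochastic. The hypothesis that every leader reaches every follower along $\mathcal{E}$ means, after reversing arcs, that from any follower one can descend through positive-weight in-neighbors until a follower adjacent to a leader is reached; thus every index of $\mathbf{P}$ has a positive-weight path to a strictly substochastic row, which forces $\rho(\mathbf{P})<1$ (equivalently, $\mathbf{I}-\mathbf{P}$ is a nonsingular M-matrix / is weakly chained diagonally dominant). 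Consequently every eigenvalue of $-\mathbf{W}_{\mathcal{F}\mathcal{F}}$ has the form $1-\lambda$ with $|\lambda|<1$, hence positive real part, so $\mathbf{W}_{\mathcal{F}\mathcal{F}}$—and therefore $\mathbf{W}$—is Hurwitz and, in particular, nonsingular.

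It then remains to show $\mathbf{W}\mathbf{H}=-\mathbf{L}$, since left-multiplying by $\mathbf{W}^{-1}$ yields $\mathbf{H}=-\mathbf{W}^{-1}\mathbf{L}$. I would read \eqref{folcomweights} and the leading-triangle formula for $\alpha_{i,u_j}$ as asserting that $[\,X_i,\ Y_i,\ 1\,]^\top$ is the unique linear combination of the corresponding augmented position vectors with those coefficients, uniqueness coming from invertibility of the (non-degenerate) leading-triangle matrix. Writing $\boldsymbol\beta_i$ for the $i$-th row of $\mathbf{H}$—so $\boldsymbol\beta_i=\mathbf{e}_i^\top$ for a leader and $\boldsymbol\beta_i=[\,\alpha_{i,u_1},\alpha_{i,u_2},\alpha_{i,u_3}\,]$ for a follower—every agent satisfies $[\,X_i,Y_i,1\,]^\top=\sum_{j\in\mathcal{L}}(\boldsymbol\beta_i)_j\,[\,X_{u_j},Y_{u_j},1\,]^\top$. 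Substituting this representation for each in-neighbor into \eqref{folcomweights} and invoking uniqueness gives the composition rule $\boldsymbol\beta_i=\sum_{j\in\mathcal{N}_i}w_{i,j}\boldsymbol\beta_j$ for every follower, which is precisely the vanishing of the $i$-th row of $\mathbf{W}\mathbf{H}$; for a leader row, $\mathcal{N}_i=\emptyset$ gives $(\mathbf{W}\mathbf{H})_{i,\cdot}=-\boldsymbol\beta_i=-\mathbf{e}_i^\top$, which is the $i$-th row of $-\mathbf{L}$. Assembling the rows yields $\mathbf{W}\mathbf{H}=-\mathbf{L}$.

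The step I expect to be the crux is the Hurwitz claim—converting the graph-reachability hypothesis into the strict bound $\rho(\mathbf{P})<1$ in a clean way (via M-matrix theory, weak chained diagonal dominance, or an explicit decay estimate on $\mathbf{P}^{k}$). Once the barycentric composition rule is available, the identity $\mathbf{W}\mathbf{H}=-\mathbf{L}$ is essentially bookkeeping, and Assumption~\ref{structurconst} enters only through the invertibility of the $3\times 3$ systems that define the $\alpha_{i,j}$ and $w_{i,j}$.
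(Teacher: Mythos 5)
Your proof is correct, but note that the paper itself does not supply an argument for Theorem~\ref{thm2}: its ``proof'' is a pointer to \cite{rastgoftar2022integration, rastgoftar2021safe, rastgoftar2021scalable}, so the comparison is really with those references, where essentially the same two ingredients appear. Your route is the standard one and it is sound: Assumption~\ref{asum1} makes the leader rows of $\mathbf{W}$ equal to $-\mathbf{e}_i^{\top}$, giving the block lower-triangular split with $\operatorname{spec}(\mathbf{W})=\{-1\}\cup\operatorname{spec}(\mathbf{W}_{\mathcal{F}\mathcal{F}})$; writing $-\mathbf{W}_{\mathcal{F}\mathcal{F}}=\mathbf{I}-\mathbf{P}$ with $\mathbf{P}$ substochastic and using leader-to-follower reachability (reversed through in-neighbors, noting that any node possessing an in-neighbor must be a follower) to conclude $\rho(\mathbf{P})<1$ is exactly the nonnegative-matrix/M-matrix argument used in the cited works; and the identity $\mathbf{W}\mathbf{H}=-\mathbf{L}$ via the barycentric composition rule $\boldsymbol\beta_i=\sum_{j\in\mathcal{N}_i}w_{i,j}\boldsymbol\beta_j$, obtained by substituting each in-neighbor's leader representation into \eqref{folcomweights} and invoking uniqueness, is the same bookkeeping as in the references. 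Two hypotheses you lean on are only implicit in the paper and worth flagging explicitly: strict positivity of the $w_{i,j}$ (equivalently, each follower lies strictly inside the triangle of its three in-neighbors in the reference configuration --- the paper asserts positivity when defining \eqref{desiredtraject}) is what makes the reachability chain carry positive weight and the leader-adjacent rows strictly deficient, and non-collinearity of the leaders (and of each in-neighbor triple) is what makes the $3\times 3$ systems defining $\alpha_{i,j}$ and $w_{i,j}$ invertible and the barycentric representation unique. With those stated, your argument is a complete, self-contained proof of the theorem the paper only cites.
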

\begin{proof}

Order the agents so that the three leaders come first. By Assumption~\ref{asum1} each leader row of $\mathbf{W}$ carries $-1$ on the diagonal and zeros elsewhere, hence
\[
\mathbf{W}=\begin{bmatrix}-\mathbf{I}_3 & \mathbf{0}\\[2pt] \mathbf{W}_{FL} & \mathbf{W}_{FF}\end{bmatrix},
\]
and the eigenvalues of $\mathbf{W}$ are $-1$ with multiplicity three together with those of $\mathbf{W}_{FF}$. Write $\mathbf{W}_{FF}=\mathbf{M}_{FF}-\mathbf{I}$, where $\mathbf{M}_{FF}\ge \mathbf{0}$ holds the follower-to-follower weights. By \eqref{folcomweights} each follower's weights are positive and sum to one, so row $i$ of $\mathbf{M}_{FF}$ sums to $1-s_i$, where $s_i\ge 0$ is the weight follower $i$ assigns to its leader in-neighbors; $\mathbf{M}_{FF}$ is therefore substochastic. The hypothesis that a path runs from a leader to every follower makes $\mathbf{M}_{FF}$ convergent, so $\rho(\mathbf{M}_{FF})<1$ and every eigenvalue of $\mathbf{W}_{FF}$ has real part at most $\rho(\mathbf{M}_{FF})-1<0$. Thus $\mathbf{W}$ is Hurwitz and invertible. Evaluating \eqref{desiredtraject} at the affine configuration gives $\mathbf{W}\mathbf{X}+\mathbf{L}\mathbf{X}_{L,d}=\mathbf{0}$, so $\mathbf{X}=-\mathbf{W}^{-1}\mathbf{L}\mathbf{X}_{L,d}$; matching $\mathbf{X}=\mathbf{H}\mathbf{X}_{L,d}$ for every leader configuration yields $\mathbf{H}=-\mathbf{W}^{-1}\mathbf{L}$. See also \cite{rastgoftar2022integration, rastgoftar2021safe, rastgoftar2021scalable}.

\end{proof}

\section{Approach}\label{Approach}
We decompose the AT of a quadcopter team into two components: (i) continuum deformation planning, which specifies a safe global transformation, and (ii) decentralized acquisition, through which agents realize the transformation using local feedback and communication.

\subsection{AT planning}\label{Affine Transformation Planning}

As discussed in Section~\ref{Prem}, AT can be formulated as a decentralized leader--follower problem, where the desired trajectories of the leaders are prescribed by~\eqref{affinetransformation}, and the followers acquire their trajectories through local inter-agent communication. Leader agents have knowledge of the transformation parameters $(\mathbf{Q},\mathbf{d})$, whereas follower agents do not have access to the global desired trajectories $\mathbf{p}_i(t)$ and instead converge to them using only local information. Under an AT, inter-agent distances can vary significantly. Therefore, the transformation parameters must be designed to ensure safety during deformation. While feasible translational motion $\mathbf{d}(t)$ can be obtained using standard motion planning methods \cite{rastgoftar2022integration, rastgoftar2021safe}, the key design variable for safety is the deformation matrix $\mathbf{Q}(t)$.

Using polar decomposition, $\mathbf{Q}$ is expressed as
\begin{equation}
\mathbf{Q}=\mathbf{R}_r\mathbf{U},
\end{equation}
where $\mathbf{R}_r\in \mathbb{R}^{3\times3}$ is a rotation matrix and $\mathbf{U}\in \mathbb{R}^{3\times 3}$ is symmetric positive definite. The matrix $\mathbf{U}$ admits the decomposition
\begin{equation}
\mathbf{U}=\mathbf{R}_D\mathbf{\Lambda}\mathbf{R}_D^T,
\end{equation}
where $\mathbf{R}_D$ specifies shear deformation and
\begin{equation}
\mathbf{\Lambda}=\mathrm{diag}(\lambda_1,\lambda_2,1)
\end{equation}
defines the principal strains. The parameters $\lambda_1,\lambda_2$ govern contraction and expansion along principal directions and directly determine inter-agent spacing under deformation.

For the planar experimental setting, we restrict to $\mathbb{R}^2$ with $\phi_r=\theta_r=0$, and parameterize the transformation using
\[
\{d_1(t),d_2(t),\lambda_1(t),\lambda_2(t),\psi_d(t),\psi_r(t)\}.
\]

\medskip
\noindent\textbf{Safety characterization via principal strains.}
To ensure collision-free coordination, the deformation must preserve sufficient separation between agents despite tracking errors. The following theorem provides a sufficient condition for safety.

\begin{theorem}[Collision avoidance under affine deformation]
\label{thm:safety}
Consider the affine transformation defined by \eqref{affinetransformation}. Suppose:

\begin{enumerate}
    \item $\|\mathbf{a}_i-\mathbf{a}_j\|\ge d_{\min}$ for all $i\neq j$,
    \item each agent has radius $r>0$,
    \item $\|\mathbf{r}_i(t)-\mathbf{p}_i(t)\|\le \delta$ for all $i$ and $t$,
    \item $\lambda_1(t),\lambda_2(t)\ge \lambda_{\min}$ for all $t$,
\end{enumerate}

where
\[
\lambda_{\min}=\frac{2(\delta+r)}{d_{\min}}.
\]

Then, for all $t\ge 0$,
\[
\|\mathbf{r}_i(t)-\mathbf{r}_j(t)\|\ge 2r,\qquad \forall i\neq j,
\]
and no collisions occur.
\end{theorem}

\begin{proof}
Under affine transformation,
\[
\|\mathbf{p}_i-\mathbf{p}_j\| = \|\mathbf{Q}(\mathbf{a}_i-\mathbf{a}_j)\|
\ge \lambda_{\min}\|\mathbf{a}_i-\mathbf{a}_j\|
\ge \lambda_{\min} d_{\min}.
\]
Using the triangle inequality,
\[
\|\mathbf{r}_i-\mathbf{r}_j\|
\ge \|\mathbf{p}_i-\mathbf{p}_j\| - 2\delta
\ge \lambda_{\min} d_{\min} - 2\delta.
\]
Substituting $\lambda_{\min}$ gives $\|\mathbf{r}_i-\mathbf{r}_j\|\ge 2r$.
\end{proof}

Theorem~\ref{thm:safety} shows that collision avoidance can be guaranteed using a single global constraint on the deformation, independent of the number of agents. This provides a scalable alternative to pairwise constraint-based approaches.

\noindent\textit{Remark} (tightness of the bound). The condition is tight only when two nearest agents sit at the minimum deformed separation $\lambda_{\min}d_{\min}$ and, at that instant, deviate from their global desired positions in opposite directions, each by the full $\delta$. A pairwise distance is reduced only by the relative deviation $(\mathbf{r}_i-\mathbf{p}_i)-(\mathbf{r}_j-\mathbf{p}_j)$, so a translation shared by the whole team leaves every separation unchanged. Bounding this relative term by $\Delta\le 2\delta$ sharpens the condition to $\lambda_{\min}=(2r+\Delta)/d_{\min}$, which is the relevant quantity when neighbor tracking errors are correlated. The bound is therefore conservative for teams whose members deform together while sharing a common transient motion, the regime of the experiments in Section~\ref{sec:results}.

\subsection{Affine transformation acquisition}\label{Affine Transformation Acquisition}

Each agent is modeled as
\begin{equation}\label{nonldyn}
\begin{cases}
\dot{\mathbf{x}}_i=\mathbf{f}(\mathbf{x}_i,\mathbf{u}_i)+\mathbf{B}_i\mathbf{G}_i(\mathbf{r}_{i,d}-\mathbf{r}_i),\\
\mathbf{r}_i=\mathbf{C}\mathbf{x}_i,
\end{cases}
\end{equation}
where $\mathbf{x}_i$ is the state and $\mathbf{r}_i$ is the position.

\begin{definition}
The invariant set is
\[
\mathcal{S}_i=\{\mathbf{x}_i:\mathbf{r}_i=\mathbf{r}_{i,d},~\bar{\mathbf{r}}_i=0\}.
\]
\end{definition}

\begin{assumption}\label{finalconst}
Leader trajectories become constant for $t\ge t_f$.
\end{assumption}

\begin{definition} We define
\[
\mathbf{x}_i^*=\begin{bmatrix}\mathbf{p}_i(t_f)\\\bar{\mathbf{p}}_i(t_f)\end{bmatrix}.
\]
as the desired final state of agent $i$.
\end{definition}

% \medskip
% \noindent\textbf{Closed-loop realization of affine transformation.}

\begin{theorem}
If each agent asymptotically tracks its reference $\mathbf{r}_{i,d}$ and Assumption~\ref{finalconst} holds, then $\mathbf{r}_i(t)\to \mathbf{p}_i(t_f)$ for all agents.
\end{theorem}

\begin{proof}
Using the stacked dynamics and communication structure, the tracking errors satisfy
\[
\mathbf{X}_d-\mathbf{X}=\mathbf{W}\mathbf{X}+\mathbf{L}\mathbf{X}_{L,d},
\]
and similarly for $\mathbf{Y}$. As $\mathbf{r}_i\to\mathbf{r}_{i,d}$, we obtain $\mathbf{W}\mathbf{X}+\mathbf{L}\mathbf{X}_{L,d}\to 0$, which implies $\mathbf{X}\to\mathbf{H}\mathbf{X}_{L,d}$. This corresponds to $\mathbf{r}_i\to \mathbf{p}_i$ for all agents.
\end{proof}

\noindent\textit{Remark.}
The closed-loop system couples local tracking dynamics with the communication topology, ensuring that decentralized feedback and local interactions recover the global affine transformation.

The asymptotic result assumes the leaders have stopped. While the formation deforms, the leader trajectories are time-varying and the follower references \eqref{desiredtraject} inherit this motion, so the guarantee during that phase is a bounded error rather than convergence to zero. The following proposition makes this precise.

\begin{proposition}[Bounded tracking during deformation]\label{prop:bounded}
Let $\mathbf{e}(t)=\mathbf{X}(t)-\mathbf{H}\mathbf{X}_{L,d}(t)$ be the deviation of the team from the affine configuration set by the current leader positions. Under \eqref{desiredtraject} this deviation follows
\[
\dot{\mathbf{e}}=\mathbf{W}\mathbf{e}+\mathbf{w}(t),
\]
where $\mathbf{w}(t)$ gathers the leader velocity that followers do not feed forward and the position measurement noise, with $\|\mathbf{w}(t)\|\le\bar{w}$. Because $\mathbf{W}$ is Hurwitz (Theorem~\ref{thm2}), the system is input-to-state stable and
\[
\|\mathbf{e}(t)\|\le\kappa\, e^{-\gamma t}\|\mathbf{e}(0)\|+\frac{\kappa}{\gamma}\,\bar{w},
\qquad \kappa\ge 1,\ \gamma>0,
\]
with $\gamma$ set by the spectral abscissa of $\mathbf{W}$.
\end{proposition}

\noindent\textit{Remark.} While the leaders move, $\bar{w}>0$ and the follower error is ultimately bounded by $(\kappa/\gamma)\bar{w}$. Once the leaders reach their final configuration at $t\ge t_f$ (Assumption~\ref{finalconst}), $\mathbf{w}(t)\to\mathbf{0}$ and $\mathbf{e}(t)\to\mathbf{0}$, which recovers the asymptotic result above. Section~\ref{sec:results} reports follower errors that stay bounded and peak in the fastest leader phase, consistent with this ultimate bound.

The block diagram of the control system for AT of a multi-agent team is shown in Fig.~\ref{ControlBlockDiagram}, where each agent is modeled by generic nonlinear dynamics.
\begin{figure}[t]
    \centering
    \includegraphics[width=\textwidth]{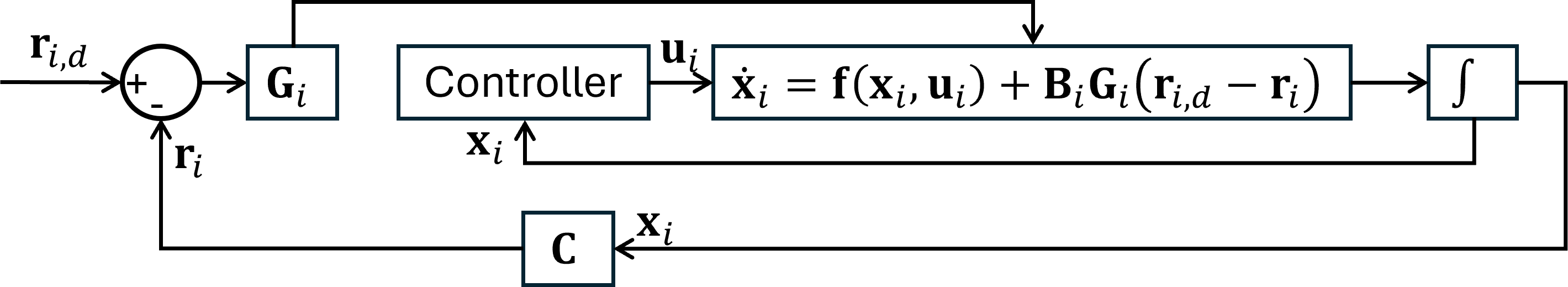}
    \caption{Block diagram of the control system developed for affine formation control of a multi-agent team with generic nonlinear dynamics.}
    \label{ControlBlockDiagram}
\end{figure}
\subsection{Discussion: Distinction from classical containment control}

The proposed decentralized realization of affine transformation differs fundamentally from classical containment control and consensus-based coordination frameworks in both structure and information flow.
First, follower agents do not require access to the desired trajectories of the leaders. Instead, each follower computes its reference signal using only the \emph{actual measured positions} of its neighbors, as defined in \eqref{desiredtraject}. This results in an output-feedback architecture in which the global affine transformation is reconstructed through local interactions, without trajectory sharing, global reference signals, or model-based prediction. Second, the communication weights are fixed and determined a priori from the initial configuration using barycentric coordinates. Unlike adaptive or consensus-based methods, no online weight updates, optimization, or agreement dynamics are required. This preserves the affine structure of the formation throughout the evolution.
Third, the proposed framework does not rely on consensus or agreement dynamics. Rather than driving agents to a common value or to the convex hull of leader trajectories, the method enforces a structured mapping from leader motion to follower motion. As a result, the entire team evolves according to a globally consistent affine transformation, including rotation, shear, and non-uniform scaling.
Finally, the closed-loop system couples local tracking dynamics with the communication topology to yield a decentralized realization of a global deformation. This distinguishes the proposed approach from classical containment control, where convergence is defined relative to convex sets but without an explicit global transformation model. Collectively, these properties enable a scalable decentralized implementation in which each agent operates using only local measurements while contributing to a coordinated motion governed by continuum deformation.

\section{Experimental setup}\label{sec:experimental_setup}

The decentralized affine transformation (AT) framework was experimentally validated using six Crazyflie 2.1 quadrotors in an indoor laboratory at the University of Arizona. A Vicon motion capture system provided high-precision 6-DOF pose estimates at 100~Hz, which were streamed to a central ground station running the Crazyswarm2 framework.

\subsection{Agent configuration}

Among the six agents, cf1, cf2, and cf6 were designated as \textit{leaders} ($\mathcal{L} = \{1, 2, 6\}$), while cf3, cf4, and cf5 acted as \textit{followers} ($\mathcal{F} = \{3, 4, 5\}$). The initial positions were specified in a YAML configuration file, with the formation operating at a nominal altitude of $Z = 0.75$~m. Table~\ref{tab:initial_positions_updated} summarizes the initial configuration.

The role and pose assignment follows a placement rule. The three leaders must occupy non-collinear initial positions, so that the leading triangle is nondegenerate and the matrix inverses in Eqs.~\eqref{leader-followerdesired} and \eqref{folcomweights} exist. Each follower must lie in the interior of the triangle formed by its three in-neighbors, which makes its barycentric weights positive and, by Theorem~\ref{thm2}, renders the weight matrix Hurwitz. The configuration in Table~\ref{tab:initial_positions_updated} satisfies both conditions.

\begin{table}[t]
    \centering
    \begin{tabular}{|c|c|c|c|c|c|c|}
    \hline
         & cf1 & cf2 & cf3 & cf4 & cf5 & cf6 \\
         \hline
         $X_i$ (m) & 0.00 & 1.00 & 0.25 & 0.75 & 0.50 & 0.50 \\
         $Y_i$ (m) & 0.00 & 0.00 & 0.50 & 0.50 & 1.00 & 1.50 \\
         Role & L & L & F & F & F & L \\
         \hline
    \end{tabular}
    \caption{Initial positions and roles of each quadcopter. L = Leader, F = Follower.}
    \label{tab:initial_positions_updated}
\end{table}

\subsection{Decentralized control architecture}

The control architecture implements the decentralized AT framework described in Section~\ref{Prem}. Critically, the system operates without requiring followers to have prior knowledge of the leader trajectories. Instead, each follower computes its desired position at every control timestep using only the \textit{actual measured positions} of its neighbors, as specified by Eq.~\eqref{desiredtraject}:
\begin{equation}
\mathbf{r}_{i,d}(t) = \sum_{j \in \mathcal{N}_i} w_{i,j} \mathbf{r}_j(t), \quad i \in \mathcal{F},
\end{equation}
where $\mathbf{r}_j(t)$ denotes the \textit{real-time actual position} of neighbor $j$ obtained from the Vicon motion capture system, and $w_{i,j}$ are fixed communication weights computed from the initial configuration using barycentric coordinates (Eq.~\eqref{folcomweights}).

The neighbor topology for each follower is defined as:
\begin{itemize}
    \item cf3: neighbors $\{$cf1, cf4, cf5$\}$
    \item cf4: neighbors $\{$cf2, cf3, cf5$\}$
    \item cf5: neighbors $\{$cf3, cf4, cf6$\}$
\end{itemize}

The coordination architecture and the sensing are separate. The control law is decentralized: each follower uses only the measured positions of its three neighbors and fixed weights, and no follower receives leader trajectories, transformation parameters, or global reference signals. The motion capture system supplies each quadcopter's position measurement in place of onboard state estimation, the role that GPS, ultra-wideband ranging, or visual-inertial odometry plays in a field deployment, and it takes no part in the coordination logic.

\subsection{Feedback control implementation}

Both leaders and followers employ a PD feedback control law with feedforward acceleration:
\begin{equation}
\mathbf{u}_i = -\mathbf{K}_p (\mathbf{r}_i - \mathbf{r}_{i,d}) - \mathbf{K}_d \dot{\mathbf{r}}_i + \mathbf{a}_{ff},
\end{equation}
where $\mathbf{K}_p$ and $\mathbf{K}_d$ are diagonal gain matrices. For leaders, $\mathbf{K}_p = \text{diag}(2.5, 2.5, 4.0)$ and $\mathbf{K}_d = \text{diag}(1.5, 1.5, 2.0)$, with feedforward acceleration from the pre-computed trajectory. For followers, $\mathbf{K}_p = \text{diag}(3.5, 3.5, 5.0)$ and $\mathbf{K}_d = \text{diag}(2.0, 2.0, 2.5)$, with zero feedforward since only position constraints are available from the weighted neighbor sum.

The gains were selected with the cascaded structure of the architecture in mind. Follower gains are set higher than leader gains because a follower tracks a reference constructed from the measured positions of its neighbors rather than a smooth precomputed trajectory: the reference carries measurement noise and lags the leader motion, and stiffer feedback shortens this lag. Vertical gains are higher than horizontal gains on all vehicles to reject the altitude coupling induced by aerodynamic effects during close-proximity flight. Starting from the default Crazyswarm2 position-loop gains, both sets were adjusted in hover and low-speed tests until the responses were well damped without oscillation.

The control loop operates at 50~Hz, with each drone receiving full-state commands (position, velocity, acceleration, yaw, angular velocity) via the Crazyswarm2 \texttt{cmdFullState} interface over a 2.4~GHz Crazyradio PA link.

\subsection{Experimental protocol}

The experiment was organized into five sequential phases with a total duration of 30 seconds, as illustrated in Fig.~\ref{fig:formation_phases}:

\begin{enumerate}
    \item \textbf{Pre-AT Phase (0--5s):} All drones execute a controlled takeoff to the nominal altitude of 0.75~m.

    \item \textbf{AT Phase 1 -- Pure Contraction (5--10s):} The formation undergoes uniform scaling with $\lambda_1 = \lambda_2$ decreasing from 1.0 to 0.5.

    \item \textbf{AT Phase 2 -- Rigid Body Motion (10--25s):} Translation and rotation are applied while maintaining the contracted shape.

    \item \textbf{AT Phase 3 -- Precise Deformation (25--30s):} Non-uniform scaling with $\lambda_1 \to 0.6$ and $\lambda_2 \to 0.9$ achieves the final configuration.

    \item \textbf{Post-AT Phase:} Synchronized landing sequence.
\end{enumerate}

Transitions between AT phases use a quintic polynomial $\beta(s) = 6s^5 - 15s^4 + 10s^3$ to ensure $C^2$ continuity.

\begin{figure}[t]
    \centering
    \begin{tabular}{ccc}
        \includegraphics[width=0.32\textwidth]{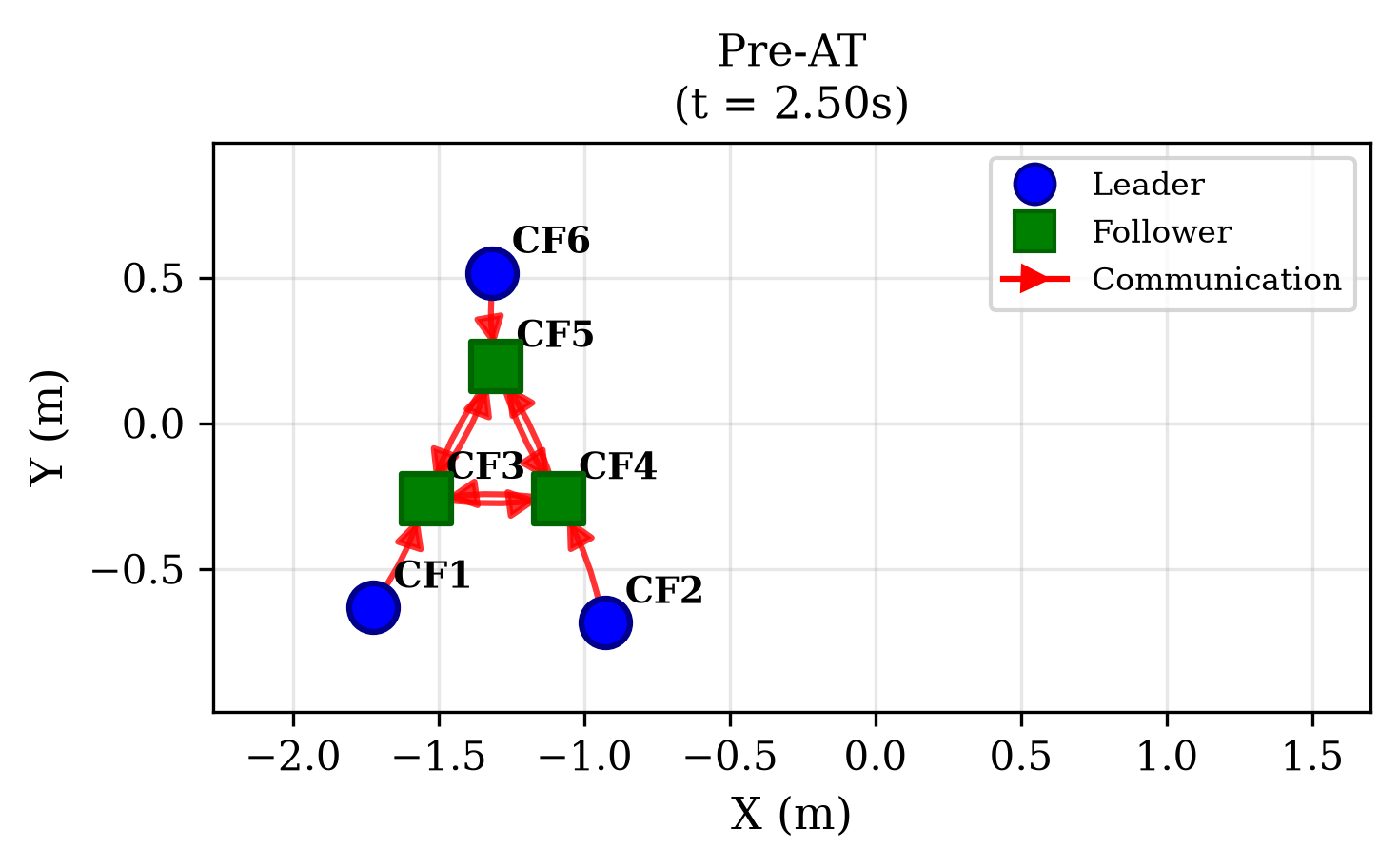} &
        \includegraphics[width=0.32\textwidth]{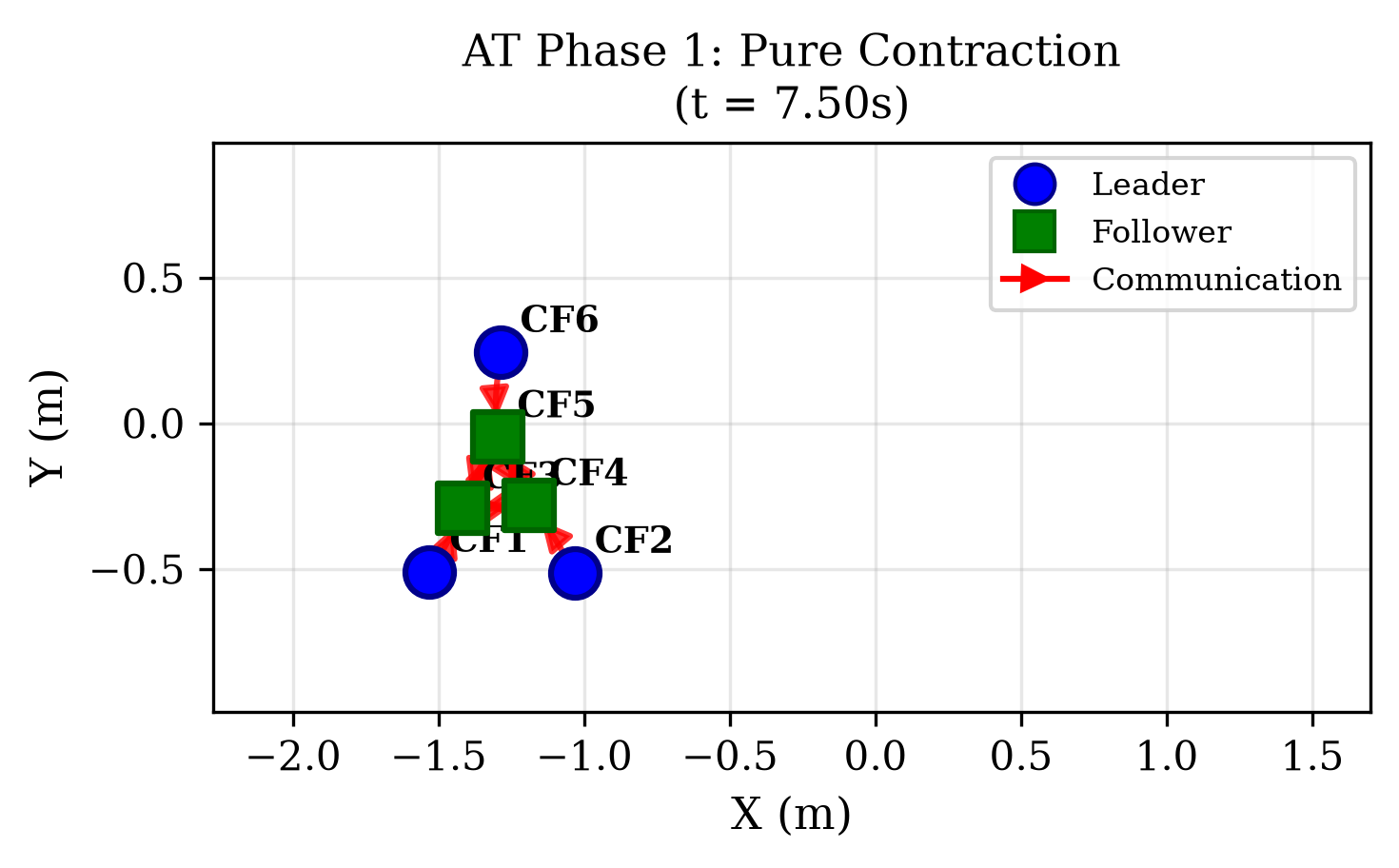} &
        \includegraphics[width=0.32\textwidth]{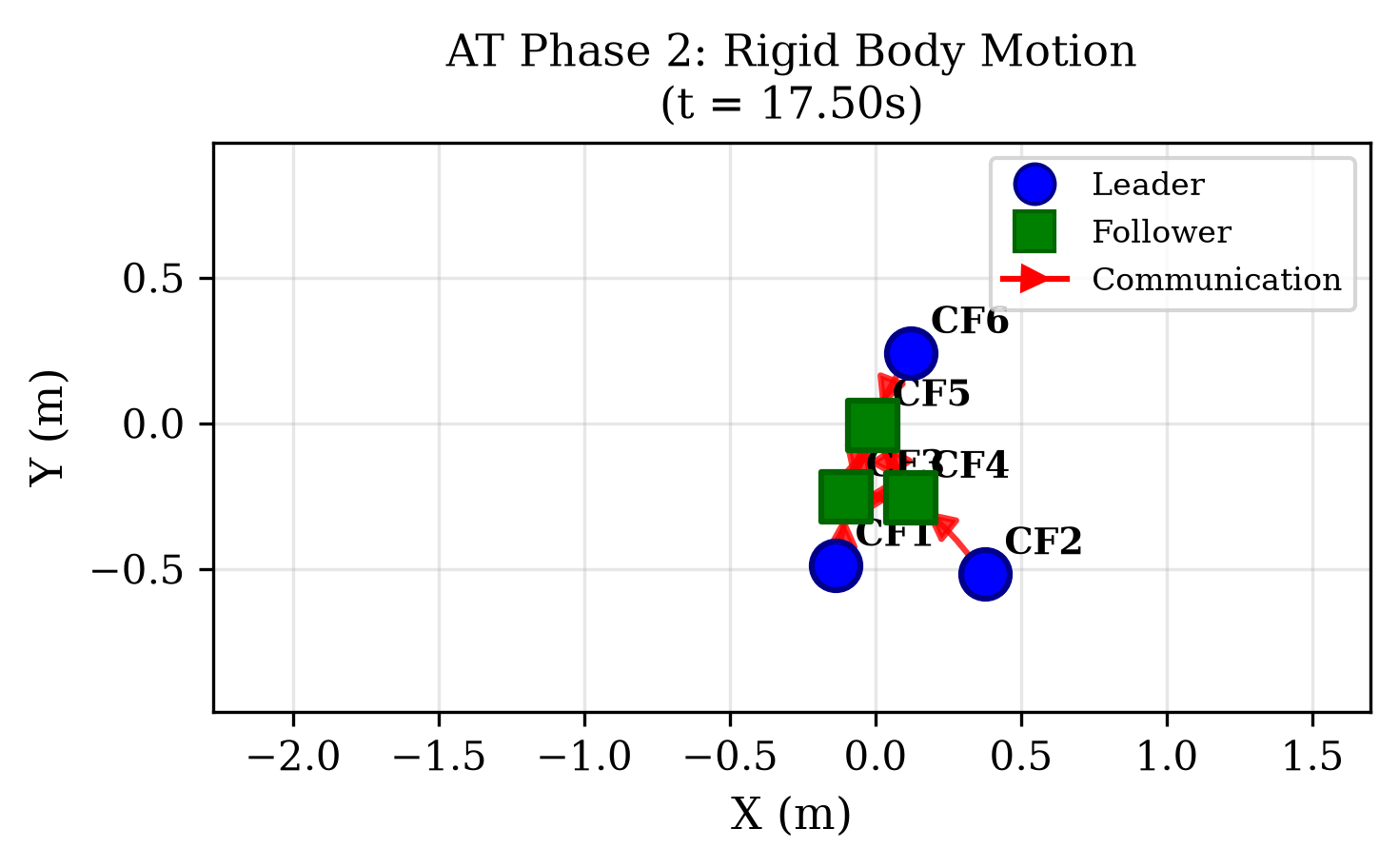} \\
        (a) Pre-AT ($t = 2.5$s) & (b) AT Phase 1 ($t = 7.5$s) & (c) AT Phase 2 ($t = 17.5$s) \\[0.3cm]
        \multicolumn{3}{c}{
            \includegraphics[width=0.32\textwidth]{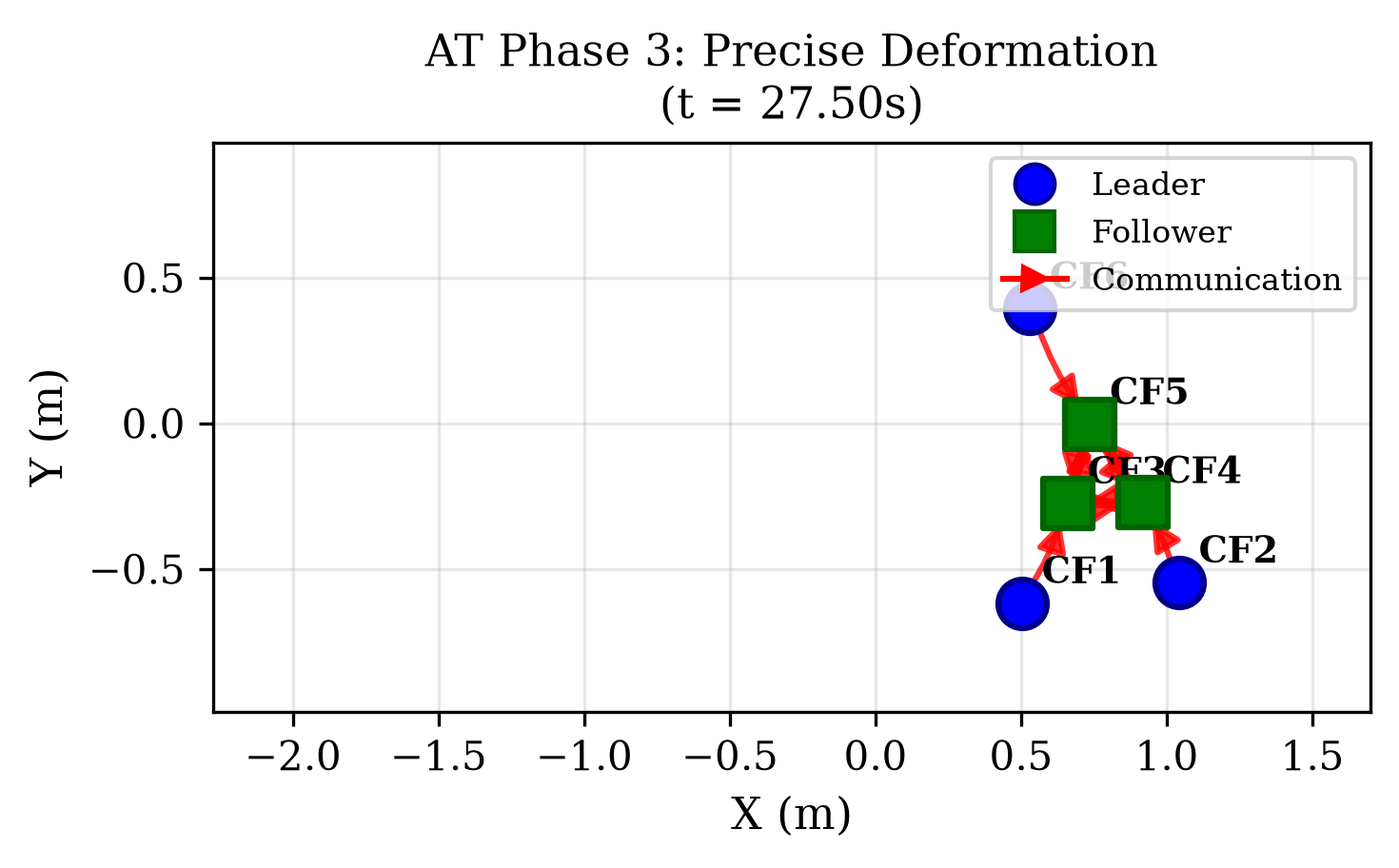} \quad
            \includegraphics[width=0.32\textwidth]{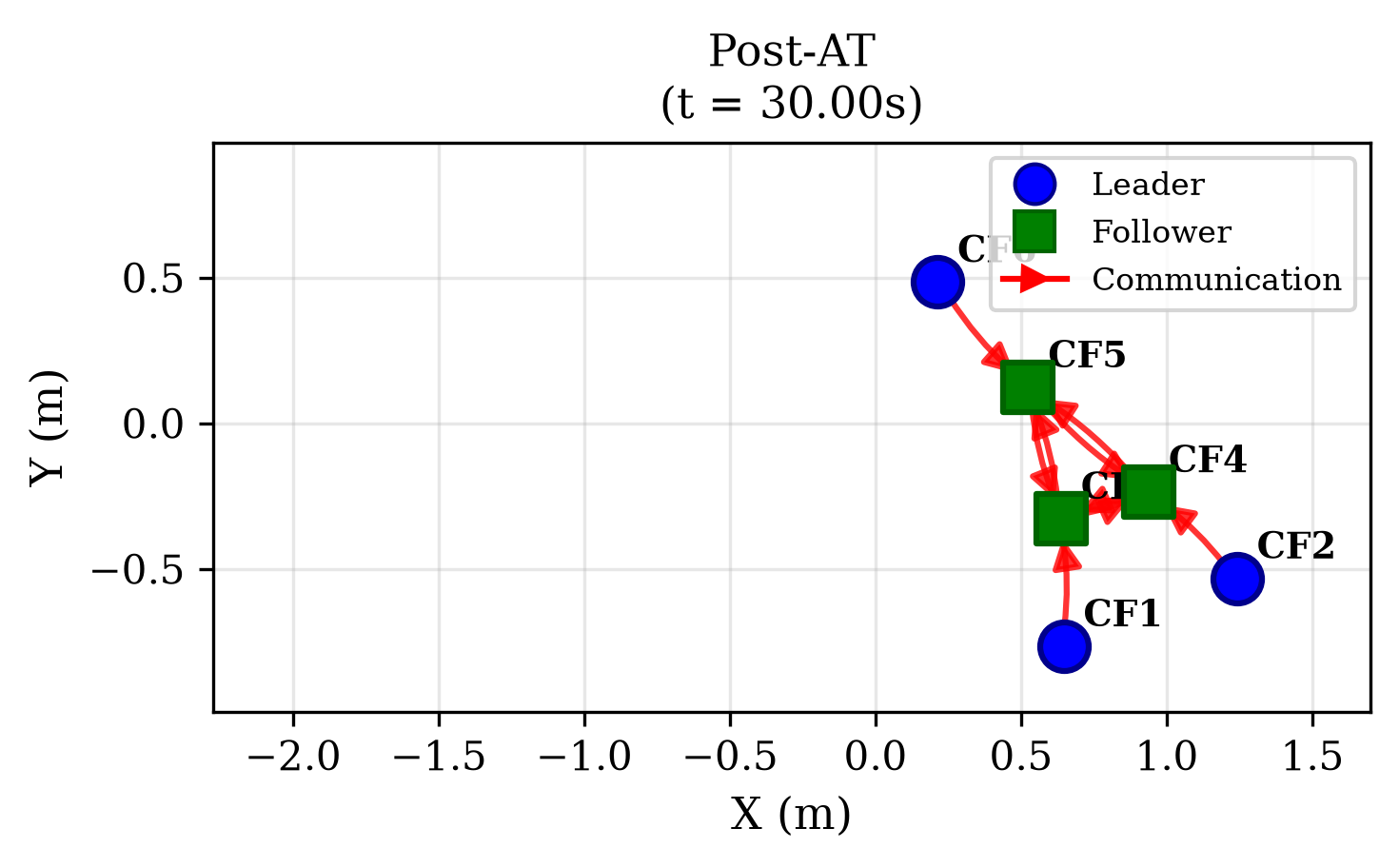}
        } \\
        \multicolumn{3}{c}{(d) AT Phase 3 ($t = 27.5$s) \hspace{1.5cm} (e) Post-AT ($t = 30.0$s)}
    \end{tabular}
    \caption{Formation configurations at each experimental phase. Leaders (blue circles) and followers (green squares) are shown with their communication topology (red arrows). The formation contracts during AT Phase 1, undergoes rigid body motion in AT Phase 2, and deforms non-uniformly in AT Phase 3.}
    \label{fig:formation_phases}
\end{figure}

\begin{table}[t]
\centering
\small
\setlength{\tabcolsep}{4pt}
\renewcommand{\arraystretch}{1.1}
\begin{tabular}{|l|c|c|c|c|c|c|}
\hline
\textbf{AT Phase} & $t_0$ & $t_f$ & $\lambda_{1,0}$ & $\lambda_{1,f}$ & $\lambda_{2,0}$ & $\lambda_{2,f}$ \\
\hline
Pure Contraction & 5s & 10s & 1.00 & 0.50 & 1.00 & 0.50 \\
Rigid Body Motion & 10s & 25s & 0.50 & 0.50 & 0.50 & 0.50 \\
Precise Deformation & 25s & 30s & 0.50 & 0.60 & 0.50 & 0.90 \\
\hline
\end{tabular}
\caption{AT phase parameters showing principal strain evolution.}
\label{tab:at_phases}
\end{table}

\subsection{Environment}

A narrow passage was constructed from stacked storage containers to demonstrate the formation's ability to contract and navigate through a physically constrained corridor (Fig.~\ref{fig:experiment_photos}). The passage width was chosen such that the initial formation could not pass through without contraction, requiring the principal strains to decrease to $\lambda_1 = \lambda_2 = 0.5$ during AT Phase~1. During AT Phase~2, the contracted formation translates through the passage via rigid body motion. A video of the experiment is available online.\footnote{\url{https://www.youtube.com/watch?v=jtAYp5TxVp0}}

\begin{figure}[t]
    \centering
    \begin{tabular}{cc}
        \includegraphics[width=0.48\textwidth]{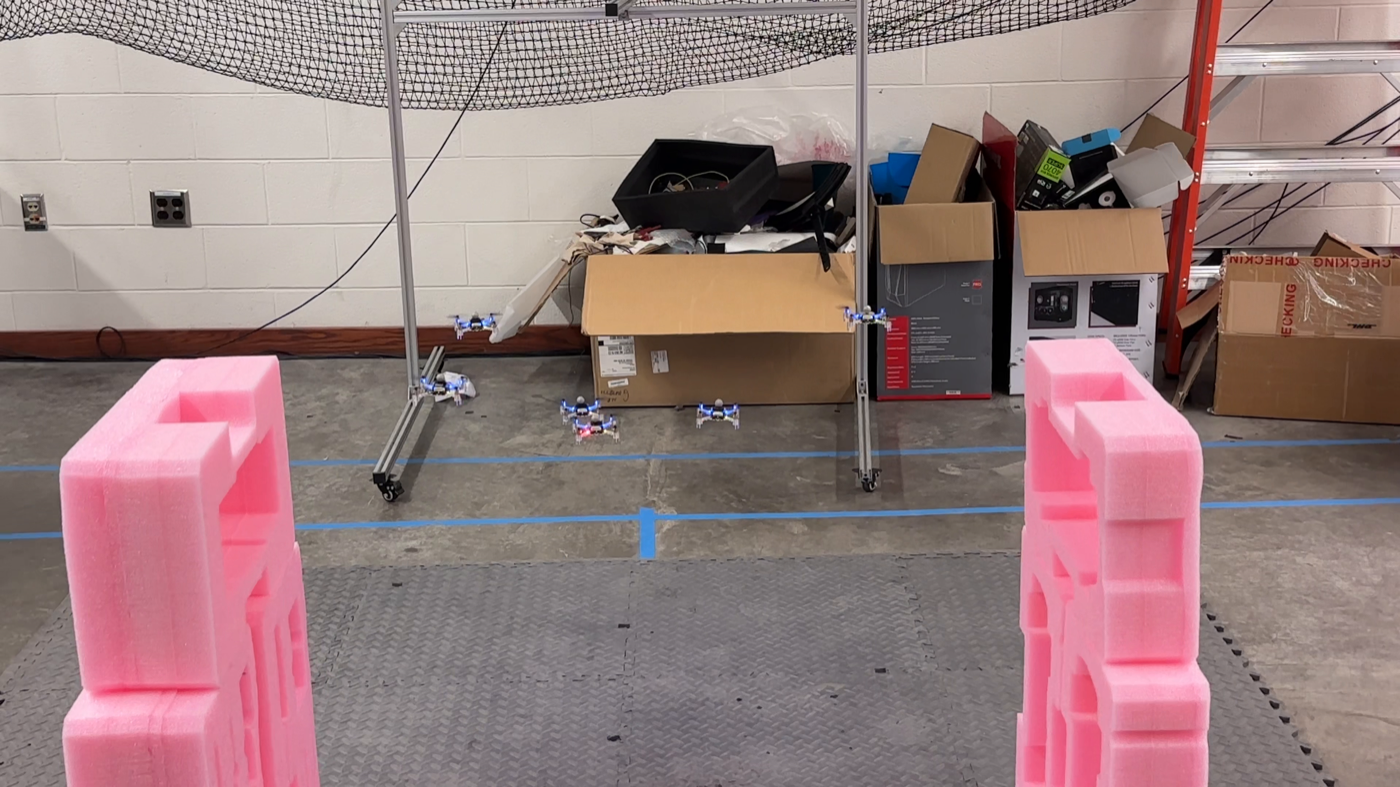} &
        \includegraphics[width=0.48\textwidth]{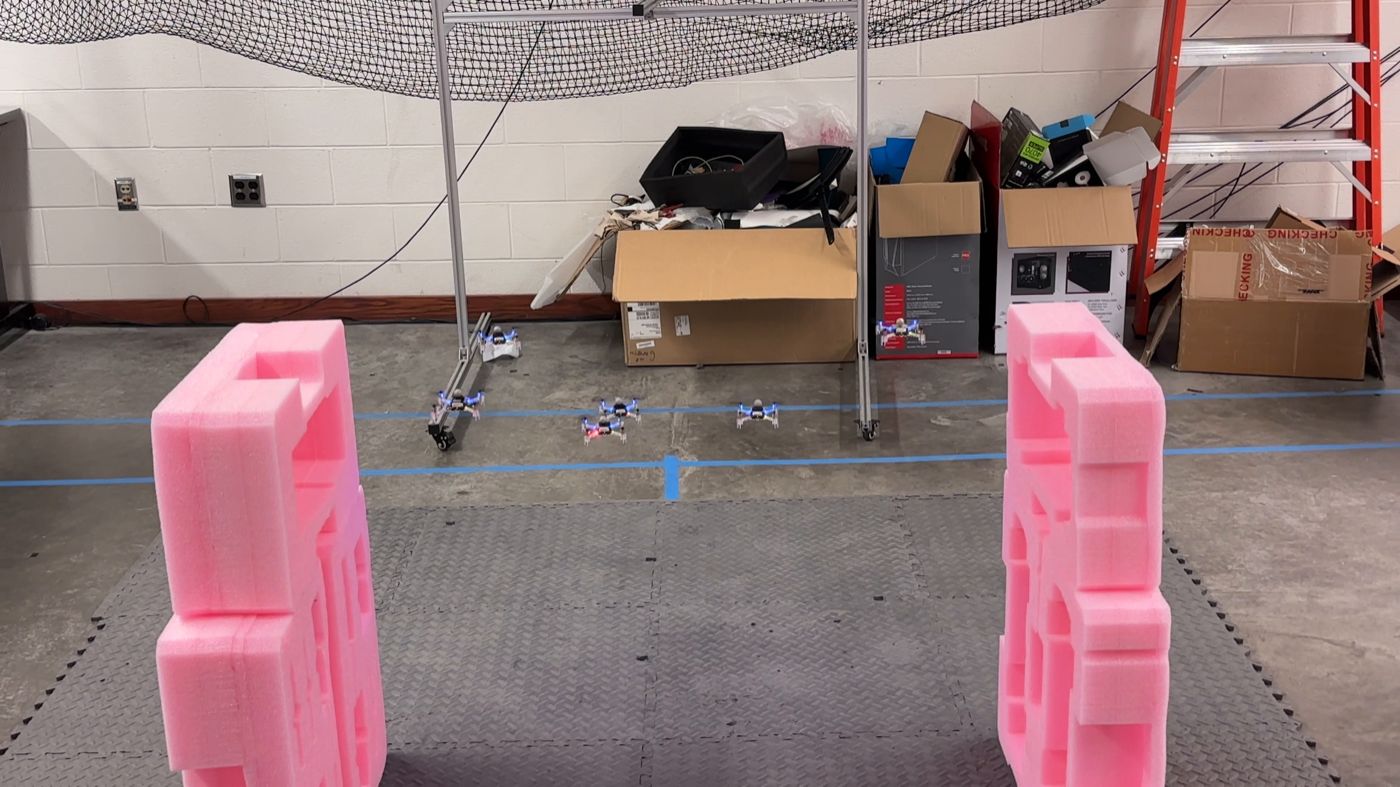} \\
        (a) Contracted formation approaching passage & (b) Formation passing through passage \\[0.3cm]
        \includegraphics[width=0.48\textwidth]{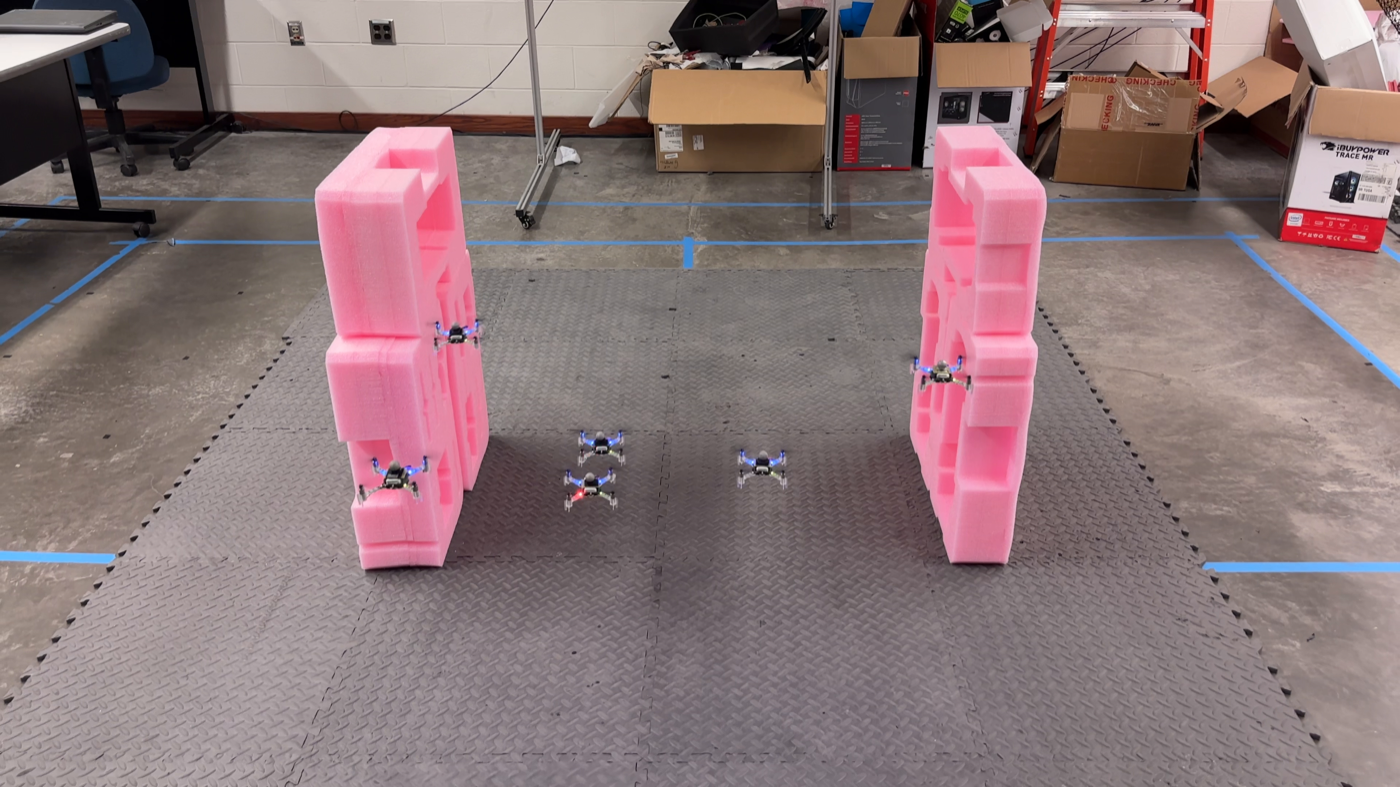} &
        \includegraphics[width=0.48\textwidth]{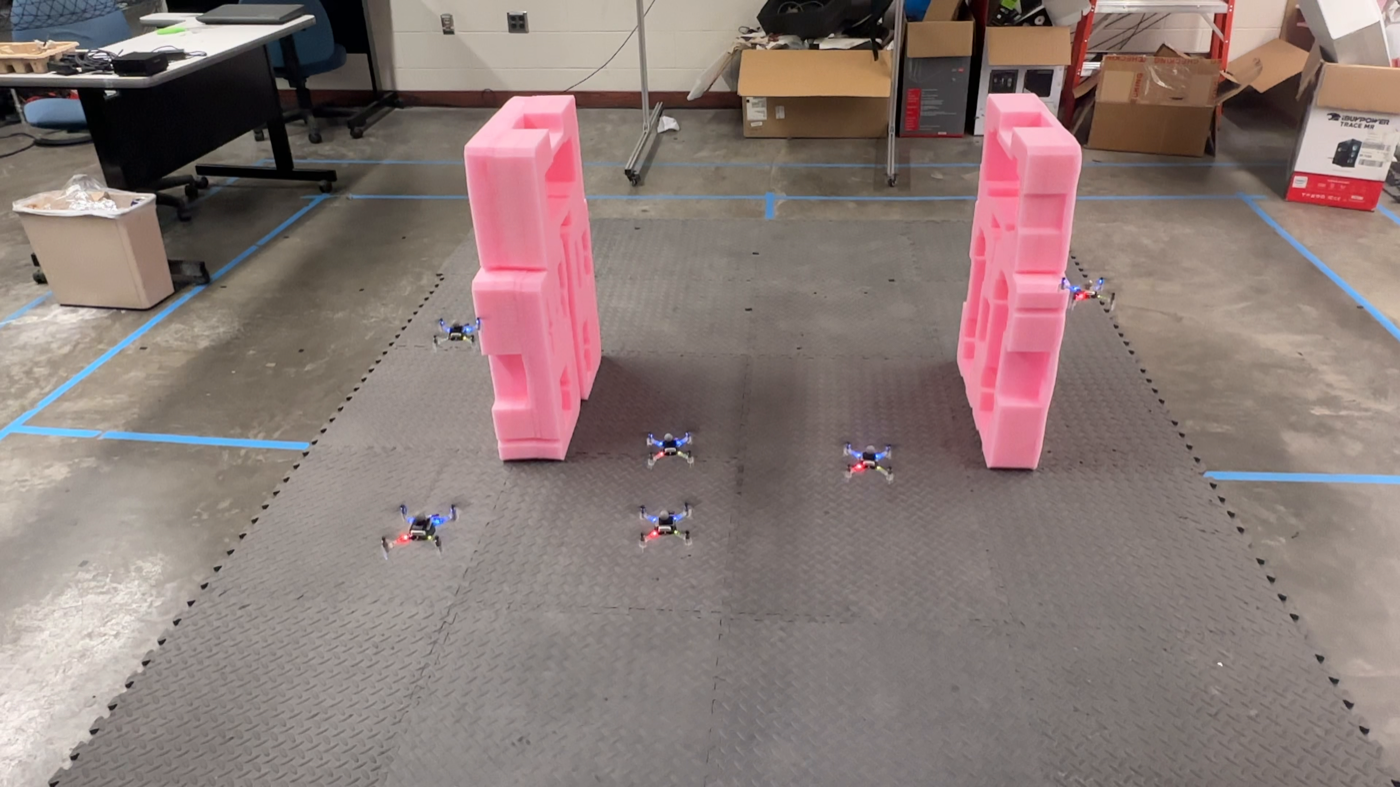} \\
        (c) Formation emerging from passage & (d) Final deformed formation
    \end{tabular}
    \caption{Experimental flight sequence showing the six-drone formation navigating through a narrow passage formed by stacked storage containers. The formation contracts during AT Phase~1 to fit through the passage, translates through during AT Phase~2, and deforms non-uniformly during AT Phase~3.}
    \label{fig:experiment_photos}
\end{figure}

\section{Experimental results}\label{sec:results}

\subsection{Tracking performance}

Fig.~\ref{fig:tracking_errors} shows the tracking errors of all six drones, and Table~\ref{tab:tracking_stats} gives per-drone statistics for this flight. Leader RMSE ranges from 3.4~cm (cf6) to 5.9~cm (cf2) and follower RMSE from 6.7~cm (cf3) to 8.3~cm (cf4). Across the three valid flights the mean leader and follower RMSE were $5.7\pm1.6$~cm and $8.4\pm1.5$~cm; the two flights with matched configuration agree to within 0.1--0.2~cm, so the spread reflects configuration differences between flights rather than run-to-run variability. Followers track less tightly than leaders because each follower's reference is the weighted sum of its neighbors' measured positions, which carries sensor noise and lags the leader motion. The leaders follow smooth precomputed trajectories.

\begin{figure}[t]
    \centering
    \includegraphics[width=\textwidth]{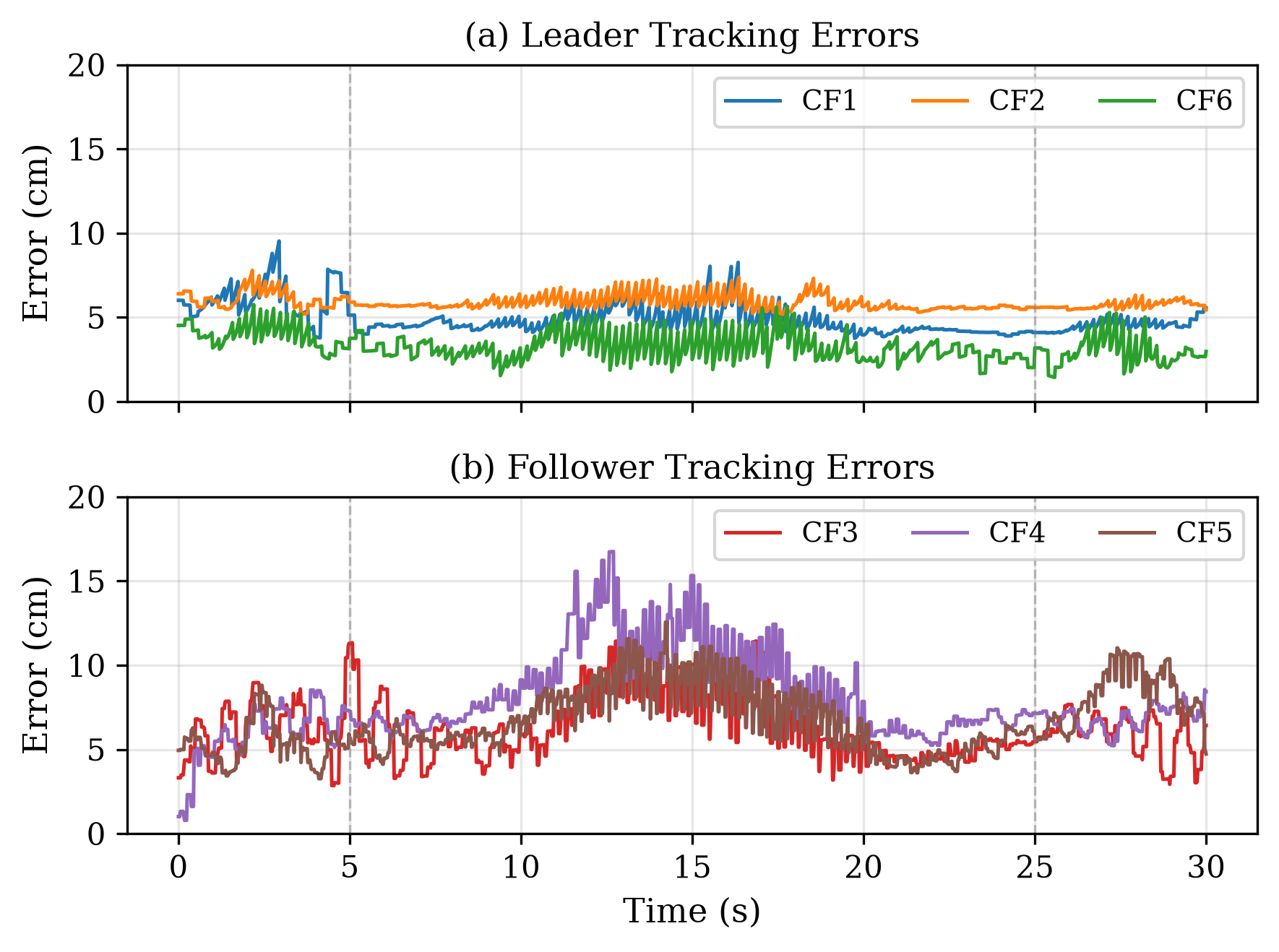}
    \caption{Tracking errors over time. (a) Leaders track pre-computed AT trajectories. (b) Followers track weighted sums of actual neighbor positions per Eq.~\eqref{desiredtraject}.}
    \label{fig:tracking_errors}
\end{figure}

\begin{table}[t]
\centering
\begin{tabular}{|c|c|c|c|c|}
\hline
\textbf{Drone} & \textbf{Role} & \textbf{Mean (cm)} & \textbf{Max (cm)} & \textbf{RMSE (cm)} \\
\hline
cf1 & Leader & 4.90 & 9.53 & 4.98 \\
cf2 & Leader & 5.89 & 7.79 & 5.90 \\
cf6 & Leader & 3.28 & 5.76 & 3.38 \\
cf3 & Follower & 6.39 & 11.58 & 6.69 \\
cf4 & Follower & 7.86 & 16.75 & 8.26 \\
cf5 & Follower & 6.78 & 12.56 & 7.07 \\
\hline
\end{tabular}
\caption{Tracking error statistics from experimental data.}
\label{tab:tracking_stats}
\end{table}

\subsection{Formation trajectories}

Fig.~\ref{fig:formation_traj} illustrates the 2D trajectories in the XY plane, comparing desired and actual paths. The formation successfully executes the contraction, rigid body motion, and deformation phases while maintaining the topological structure defined by the communication graph.

\begin{figure}[t]
    \centering
    \includegraphics[width=\textwidth]{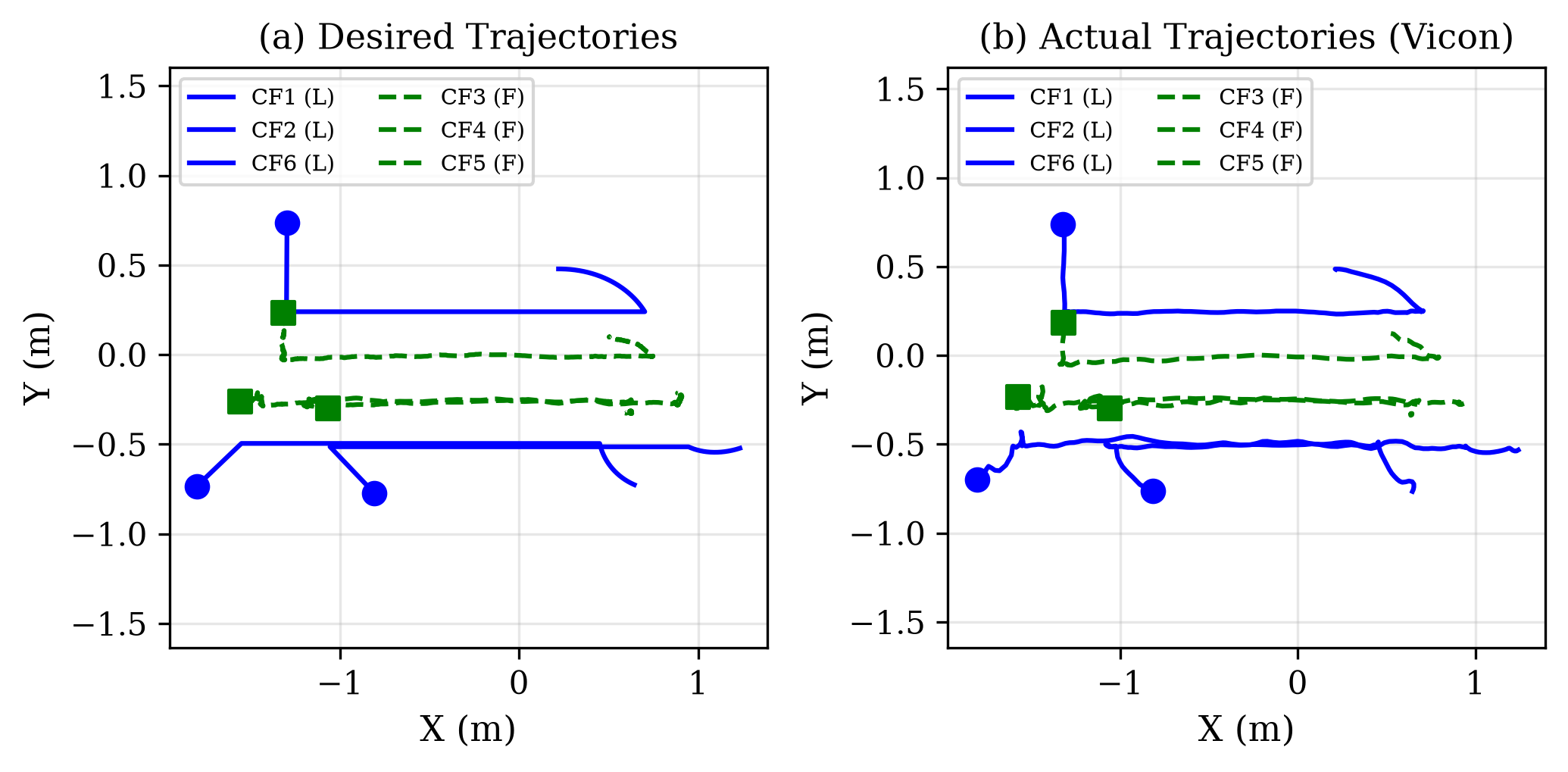}
    \caption{Formation trajectories in the XY plane. (a) Desired trajectories from the AT framework. (b) Actual trajectories measured by Vicon. Circles denote initial positions.}
    \label{fig:formation_traj}
\end{figure}

\subsection{Validation of decentralized AT}

The experimental results validate the key theoretical claims:

\begin{enumerate}
    \item \textbf{Decentralized Computation:} Followers successfully track their desired positions using only local neighbor information, without access to leader trajectories or global coordination.

    \item \textbf{Real-Time Position Updates:} The control law uses actual measured positions $\mathbf{r}_j(t)$ rather than pre-computed trajectories, demonstrating true decentralized operation.

    \item \textbf{Collision Avoidance:} We computed the distance between every pair of quadcopters at every logged time step. The smallest separation over the whole flight was $0.154$~m, between cf3 and cf4 at $t = 12.7$~s, above the $2r = 0.13$~m collision threshold, and no pair crossed the threshold at any time (Fig.~\ref{fig:separation}). Theorem~\ref{thm:safety} gives a sufficient condition $\lambda \geq \lambda_{\min} = 2(\delta + r)/d_{\min}$ for this separation, with $d_{\min} = 0.50$~m the minimum initial inter-agent distance (cf3 and cf4, Table~\ref{tab:initial_positions_updated}) and $\delta$ a bound on the tracking error. The bound is conservative: it assumes the two nearest agents deviate toward each other by the full $\delta$ at maximal contraction, which places $\lambda_{\min}$ above the flown $\lambda = 0.5$ for any $\delta$ consistent with the measured errors (Section~\ref{sec:discussion}). A collision depends on the relative deviation between neighbors, and the mean residual of a least-squares affine fit of the realized formation onto its initial configuration was 4.4~cm, so the team held the commanded relative geometry closely and the measured separation stayed inside the threshold.
\end{enumerate}

The source code and experimental data are available at:
\begin{center}
\url{https://github.com/smart-lab-uofa/affine-formation-control}
\end{center}

\begin{figure}[t]
    \centering
    \includegraphics[width=0.8\textwidth]{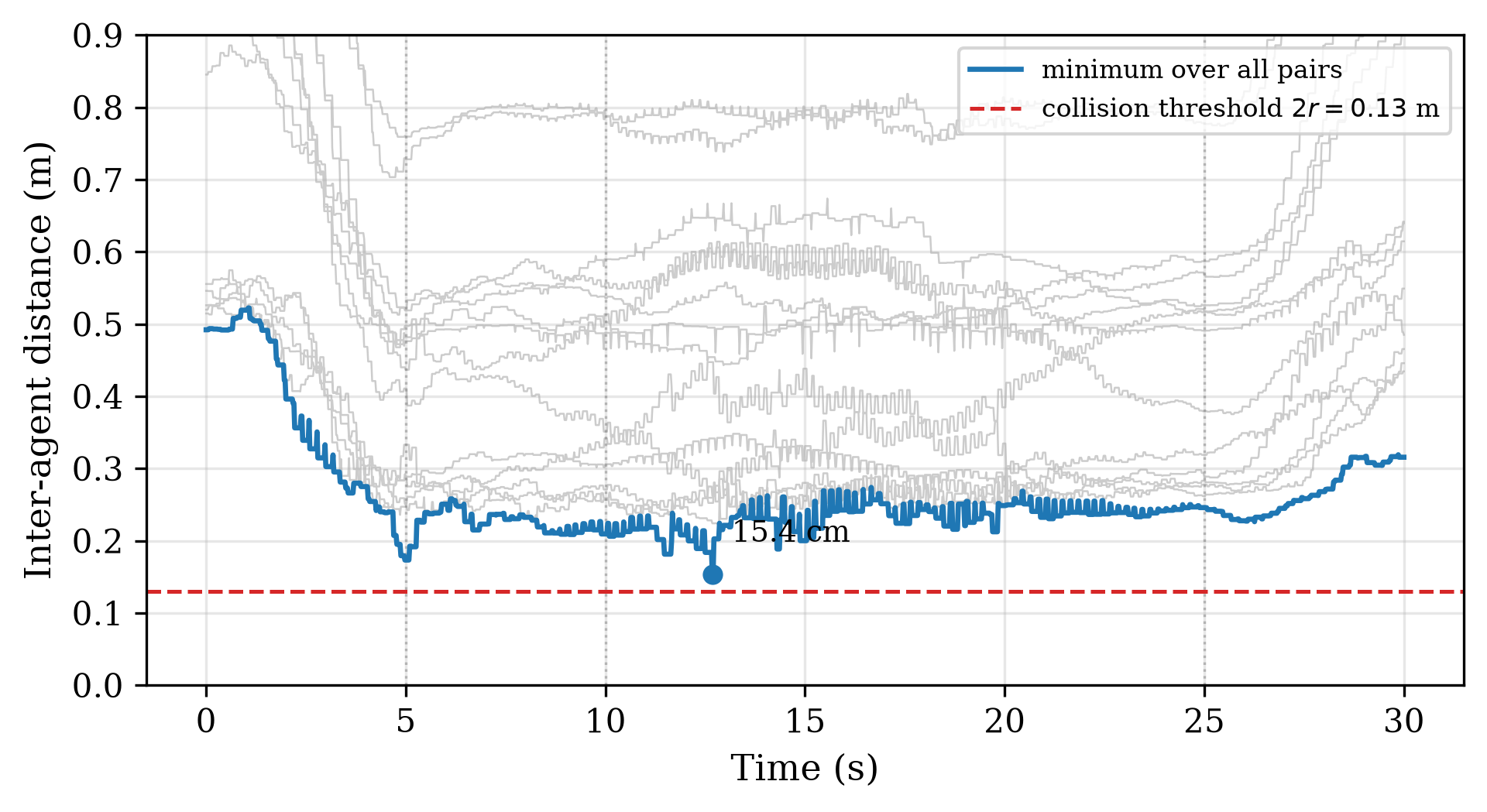}
    \caption{Pairwise inter-agent distance over the flight. The bold curve is the minimum over all 15 pairs and the dashed line is the $2r = 0.13$~m collision threshold. The smallest separation is 0.154~m, at $t = 12.7$~s.}
    \label{fig:separation}
\end{figure}

\subsection{Phase-by-phase tracking analysis}

The tracking error profiles in Fig.~\ref{fig:tracking_errors} show phase-dependent behavior. During the pre-AT takeoff (0--5~s), leader errors settle to a few centimeters as the drones reach their hover positions. At the onset of pure contraction ($t = 5$~s), follower errors rise as their reference positions start to move through the weighted sum of neighbor positions. During the rigid-body motion phase (10--25~s) follower errors are largest, with the peak of 16.8~cm at a direction change in the leader trajectories. The final deformation phase (25--30~s) adds another transient as the non-uniform scaling ($\lambda_1 \to 0.6$, $\lambda_2 \to 0.9$) differentially shifts follower references. Errors stay bounded in every phase (Fig.~\ref{fig:convergence}), consistent with the bounded-error result of Proposition~\ref{prop:bounded}.

Table~\ref{tab:phase_errors} quantifies this. Leader errors change little across phases (mean 4.4--5.4~cm), while follower errors grow from 5.8~cm at takeoff to 7.7~cm during rigid-body motion, the phase with the fastest leaders, and recover partially in the final deformation. The follower maximum, 16.8~cm in AT Phase~2, falls at a direction change of the leader trajectories, matching the lag of the cascaded neighbor-measurement path.

\begin{table}[t]
\centering

\begin{tabular}{|l|c|c|c|c|}
\hline
\textbf{Phase} & \multicolumn{2}{c|}{\textbf{Leaders (cm)}} & \multicolumn{2}{c|}{\textbf{Followers (cm)}} \\
 & Mean & Max & Mean & Max \\
\hline
Pre-AT (0--5~s) & 5.4 & 9.5 & 5.8 & 11.3 \\
AT 1: Contraction (5--10~s) & 4.4 & 6.4 & 6.1 & 11.3 \\
AT 2: Rigid-body motion (10--25~s) & 4.6 & 8.3 & 7.7 & 16.8 \\
AT 3: Deformation (25--30~s) & 4.4 & 6.3 & 7.0 & 11.0 \\
\hline
\end{tabular}
\caption{Tracking error statistics per experimental phase, aggregated over the three leaders and the three followers.}
\label{tab:phase_errors}

\end{table}

\begin{figure}[t]
    \centering
    \includegraphics[width=0.8\textwidth]{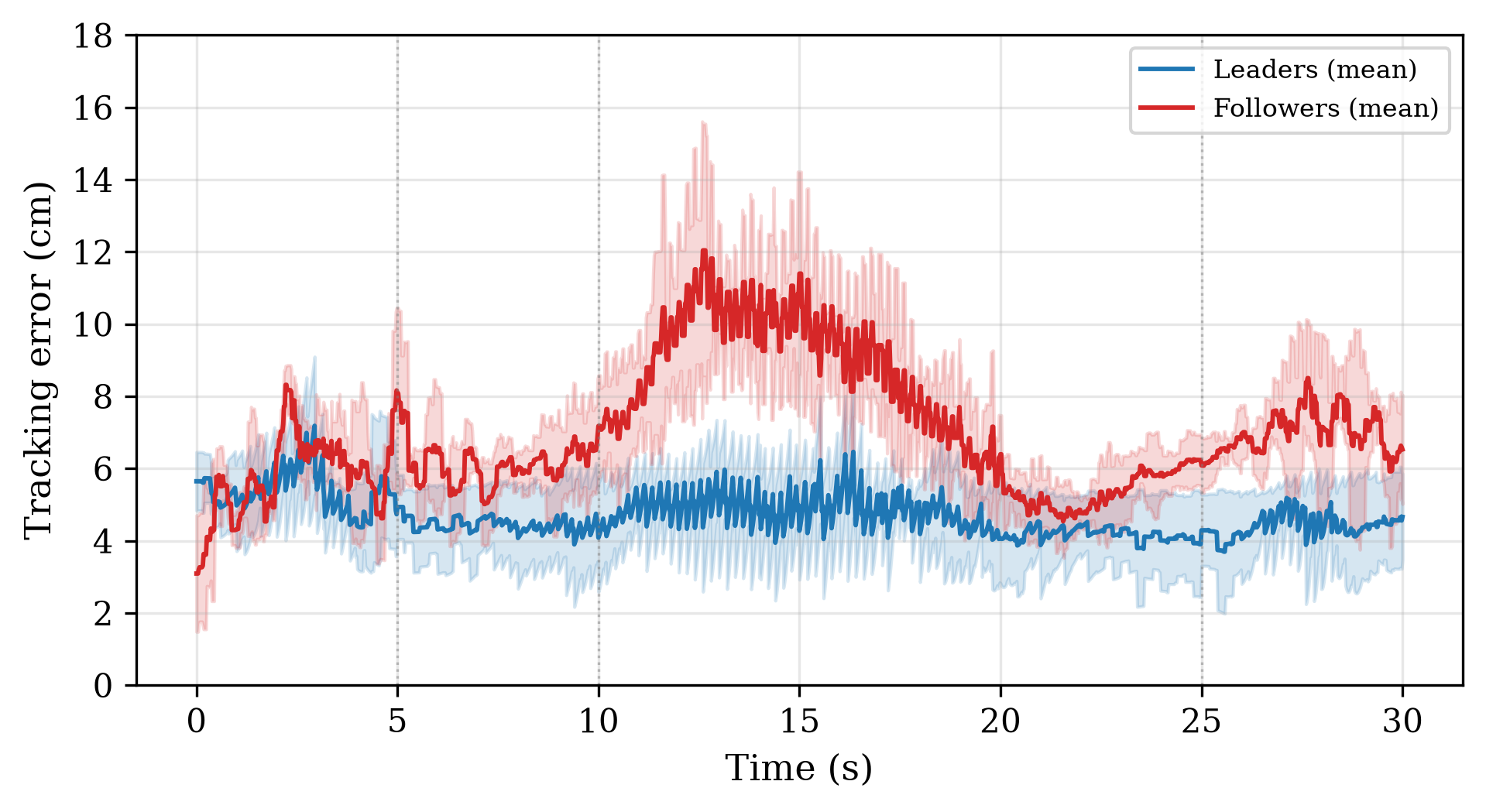}
    \caption{Mean tracking error of the leaders and of the followers over time, with shaded $\pm1$ standard deviation across the three drones in each role. Leader error stays near 5~cm; follower error peaks during the rigid-body phase and remains bounded, consistent with Proposition~\ref{prop:bounded}.}
    \label{fig:convergence}
\end{figure}

\section{Discussion}\label{sec:discussion}

The experimental results demonstrate the practical viability of decentralized AT for multi-agent coordination. The bounded tracking errors observed throughout all phases confirm that the decentralized control law successfully guides followers to their AT-defined positions using only local communication. The error differential between leaders (3--6~cm) and followers (7--8~cm) is attributable to the cascaded nature of the architecture: follower reference signals are computed from the actual (noisy) positions of neighbors, introducing an additional error source compared to the leaders' smooth pre-computed trajectories.

The safety condition is conservative. Theorem~\ref{thm:safety} is a sufficient condition built from worst-case bounds, so it trades tightness for a guarantee that holds for any team size through a single scalar check. The flight data shows that the resulting margin is practical: the smallest measured inter-agent distance was $0.154$~m against the $0.13$~m threshold, and the mean affine-fit residual was 4.4~cm. This conservatism is acceptable in safety-critical operation and in large teams, where maintaining and verifying pairwise constraints becomes infeasible; when a mission requires tighter spacing than the strain bound permits, a per-pair analysis of the specific configuration is the appropriate refinement.

\subsection{Comparison with related approaches}
% per your guidance that consensus/containment are not task-comparable with AT.
% Please review the argument and the table entries.]
A direct experimental comparison between decentralized AT and the classical coordination paradigms is not well defined, because the methods pursue different objectives. Consensus-based formation control drives the agents to a rigid formation specified by fixed displacements and cannot produce the deformations that the task requires: contraction, shear, and non-uniform scaling have no counterpart in the consensus objective. Containment control keeps the followers inside the convex hull of the leaders but does not assign them unique positions, so tracking error with respect to a desired configuration is not defined. Neither method can execute the narrow-passage mission of Section~\ref{sec:experimental_setup}. For this reason, Table~\ref{tab:comparison} compares the methods analytically, on the indices that are well defined for all of them: the convergence condition, the follower objective, the deformation capability, the collision-avoidance guarantee, and the computation and communication cost per follower per control step. Decentralized AT matches the local-communication cost of consensus and containment while assigning unique follower positions, supporting full formation deformation, and providing a collision-avoidance guarantee through a single strain bound rather than pairwise constraints that grow with the number of agents. On the hardware, the decentralized update took 0.86~ms per agent per control step, and the loop held 50~Hz at a 26\% duty cycle (Fig.~\ref{fig:computation}), matching the $O(1)$ per-follower cost in Table~\ref{tab:comparison}.

\begin{figure}[t]
    \centering
    \includegraphics[width=0.7\textwidth]{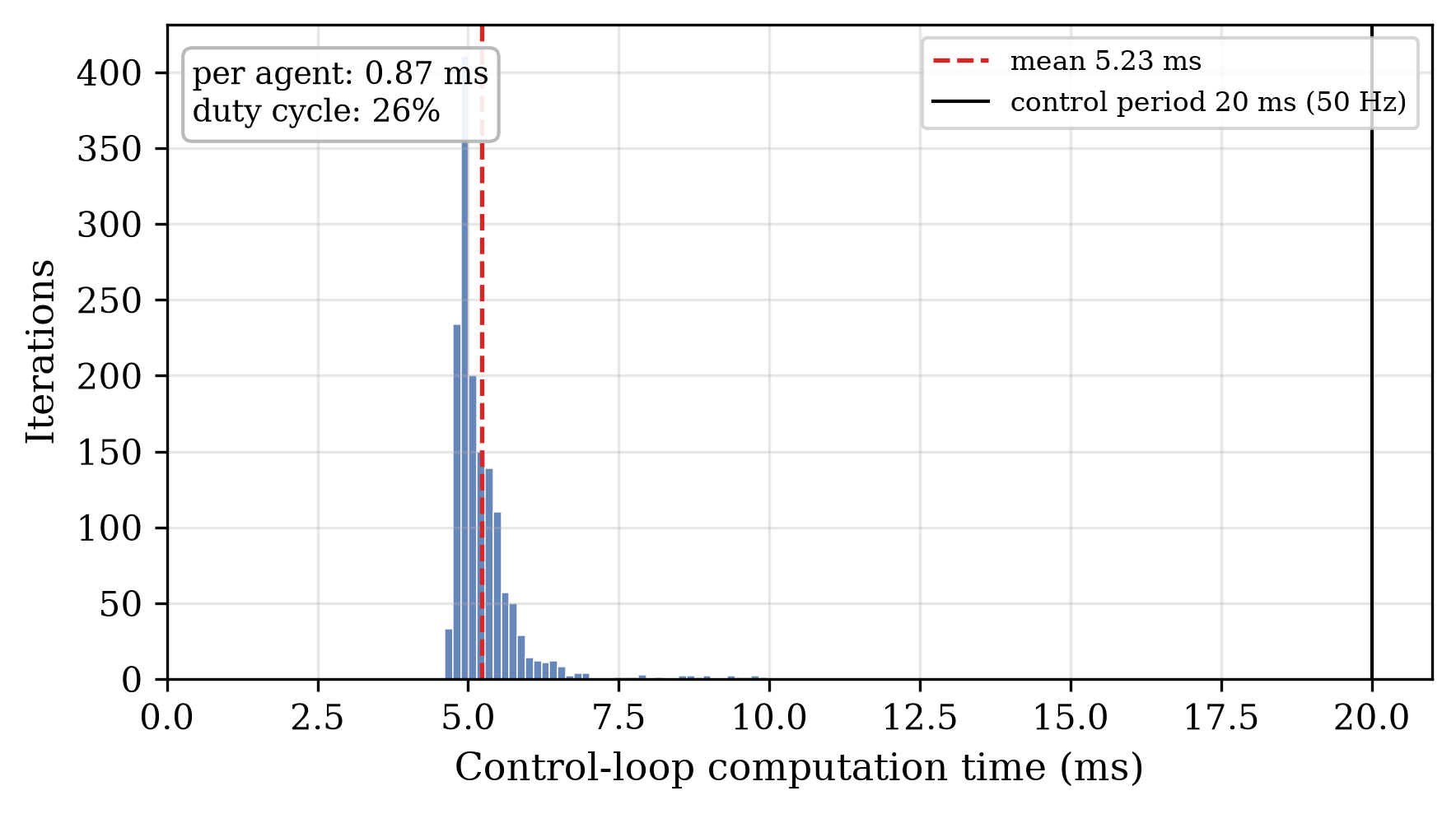}
    \caption{Per-iteration computation time of the control loop over the flight. The mean is 5.23~ms against the 20~ms control period, a 26\% duty cycle, or 0.86~ms per agent.}
    \label{fig:computation}
\end{figure}

\subsection{Limitations and future work}
Several limitations of the present study should be noted. First, while the control algorithm is fully decentralized, with each follower computing its desired position from only its neighbors' measured positions, the current implementation relies on a centralized Vicon motion capture system for state estimation and a ground station for command relay via Crazyradio. Achieving fully onboard decentralized sensing and communication remains an important direction for future work.

Second, the experiments were conducted with six agents (three leaders, three followers). While the per-follower computational cost is $O(1)$, validating the framework with significantly larger teams would provide stronger empirical evidence of scalability.

Finally, the present work is restricted to two-dimensional AT. Extension to three-dimensional AT, requiring four leaders positioned at the vertices of a tetrahedron, would broaden the applicability to more complex coordination scenarios. In the 3D formulation, the barycentric weights are computed from the 3D initial configuration, and a third principal strain $\lambda_3$ enters the collision-avoidance bound. Three technical problems remain open before such a deployment: the downwash coupling between vertically neighboring quadcopters, which perturbs followers located below other vehicles; the computation of communication weights for followers located near the faces of the leading tetrahedron, where the weight matrix approaches singularity; and the planning of the additional deformation coordinates introduced by the third dimension. We plan to extend the decentralized realization to three-dimensional affine transformations, following the $n$-D simplex formulation of \cite{rastgoftar2021safe}, with collision avoidance enforced by constraining the eigenvalues of the affine map.

\begin{table}[t]
\centering
\small
\setlength{\tabcolsep}{3pt}
\renewcommand{\arraystretch}{1.15}

\resizebox{\textwidth}{!}{%
\begin{tabular}{|l|c|c|c|}
\hline
 & \textbf{Consensus} & \textbf{Containment} & \textbf{Decentralized AT} \\
\hline
Follower objective & Rigid formation & Leaders' convex hull & Unique AT position \\
Convergence condition & \begin{tabular}{@{}c@{}}Spanning tree\\from leader\end{tabular} & \begin{tabular}{@{}c@{}}Path from a leader\\to each follower\end{tabular} & \begin{tabular}{@{}c@{}}Path from every leader\\to every follower\end{tabular} \\
Formation deformation & None (rigid) & Not specified & Full \\
Collision avoidance & No guarantee & No guarantee & Global, one strain bound \\
Computation per step & $O(|\mathcal{N}_i|)$ & $O(|\mathcal{N}_i|)$ & $O(1)$ (3 weights) \\
Communication per step & $|\mathcal{N}_i|$ states & $|\mathcal{N}_i|$ states & 3 neighbor states \\
\hline
\end{tabular}}
\caption{Analytical comparison of decentralized AT with consensus-based formation control and containment control. Costs are per follower per control step; $\mathcal{N}_i$ denotes the neighbor set and $N$ the team size.}
\label{tab:comparison}

\end{table}

\section{Conclusion}\label{Conclusion}

This paper presented a decentralized AT framework for safe multi-agent aerial coordination. The proposed approach represents collective motion through a global affine map and uses the principal strains of this map to impose a compact collision-avoidance condition. This provides a scalable safety certificate that is independent of the number of agents and avoids the pairwise constraint growth associated with many safety-critical coordination methods.
A decentralized leader--follower realization was developed in which followers reconstruct their reference positions using only real-time measured positions of neighboring agents and fixed barycentric communication weights. Under local tracking convergence and fixed terminal leader positions, the resulting closed-loop system converges to a configuration consistent with the prescribed affine transformation.
The framework was validated experimentally using six Crazyflie quadcopters navigating through a physically constrained narrow passage. The experiments demonstrated contraction, rigid-body motion, and non-uniform deformation while maintaining bounded tracking errors and keeping the measured inter-agent separation above the collision threshold. These results establish decentralized affine transformation as a promising mechanism for scalable, safe, and geometry-aware coordination of multi-agent aerial systems. Future work will extend the framework to larger teams, onboard sensing and communication, and fully three-dimensional affine transformations.

%% ============================================================
%% Declarations
%% ============================================================

\section*{CRediT authorship contribution statement}
\textbf{Garegin Mazmanyan:} Experimental implementation, Data analysis, Manuscript preparation. \textbf{Hossein Rastgoftar:} Conceptualization, Theoretical development, Supervision, Manuscript preparation.

\section*{Declaration of competing interest}
The authors declare that they have no known competing financial interests or personal relationships that could have appeared to influence the work reported in this paper.

\section*{Data availability}
The experimental data and source code supporting this study are available at \url{https://github.com/smart-lab-uofa/affine-formation-control}.

\section*{Acknowledgements}
Not applicable.

%% ============================================================
%% References
%% ============================================================

\bibliographystyle{elsarticle-num}
\bibliography{reference}

\end{document}